\newtheorem{theorem}	 			{Theorem}[section]
\newtheorem{lemma}		[theorem]	{Lemma}
\newtheorem{corollary}		[theorem]	{Corollary}
\newtheorem{prop}		[theorem]	{Proposition}
\newtheorem{definition}
[theorem]	{Definition}}
\theoremstyle{break}
{\theorembodyfont{\rmfamily} }
\newenvironment{proof}{\noindent {\em {Proof:}}}{$\blacksquare$\vskip
\belowdisplayskip}
\newenvironment{prevproof}[2]{\noindent {\em {Proof of
{#1}~\ref{#2}:}}}{$\blacksquare$\vskip \belowdisplayskip}
\newcommand{\prob}[2][]{\text{\bf Pr}\ifthenelse{\not\equal{}{#1}}{_{#1}}{}\!\left[#2\right]}
\newcommand{\expect}[2][]{\text{\bf E}\ifthenelse{\not\equal{}{#1}}{_{#1}}{}\!\left[#2\right]}
\def\eps{\epsilon}
\newcommand{\bfx}{\mathbf{x}}
\begin{document}
\title{Centralization in Block Building and Proposer-Builder Separation}
\author{Maryam Bahrani\thanks{a16z crypto. Email:
    \texttt{mbahrani@a16z.com}.} \and Pranav Garimidi\thanks{a16z
    crypto. Email: \texttt{pgarimidi@a16z.com}.} \and Tim
  Roughgarden\thanks{Columbia University \& a16z crypto. 
Author's research at Columbia
  University supported in part by NSF awards CCF-2006737
and CNS-2212745.
Email: \texttt{troughgarden@a16z.com}. 
}}

\maketitle           

\begin{abstract}
The goal of this paper is to rigorously interrogate conventional
wisdom about centralization in block-building (due to, e.g., MEV and
private order flow) and the outsourcing of block-building by
validators to specialists (i.e., proposer-builder separation):
\begin{enumerate}

\item Does heterogeneity in skills and knowledge across block
  producers inevitably lead to centralization?

\item Does proposer-builder separation eliminate heterogeneity and
  preserve decentralization among proposers?

\end{enumerate}
This paper develops mathematical models and results that offer answers
to these questions:
\begin{enumerate}

\item In a 
  game-theoretic model with endogenous staking, heterogeneous block
  producer rewards, and staking costs, we
  quantify the extent to which heterogeneous rewards lead to
  concentration in the equilibrium staking distribution.

\item In a stochastic model in which heterogeneous block producers
  repeatedly reinvest rewards into staking, we quantify, as a function
  of the block producer heterogeneity, the rate at which stake
  concentrates on the most sophisticated block producers.

\item In a model with heterogeneous proposers and specialized
  builders, we quantify, as a function of the competitiveness of the
  builder ecosystem, the extent to which proposer-builder separation
  reduces the heterogeneity in rewards across different proposers.

\end{enumerate}
Our models and results take advantage of connections to contest
design, P\'olya urn processes, and auction theory.
\end{abstract}

\section{Introduction}\label{s:intro}

\paragraph{Heterogeneity in rewards for block production.}
The economics of block production for blockchain protocols has been
growing increasingly complex over time, with block producers earning
revenue from an increasing number of different sources.  For example,
when the Bitcoin and Ethereum protocols first launched, there was
minimal demand for blockspace and minimal activity at the application
layer.  Publishing a block netted a fixed block reward for the miner
or validator that produced it but little other revenue.  In this
regime, the frequency of block production varies across block
producers (according to the hashrate or stake invested in the
protocol), but the per-block reward does not.

In time, demand for blockspace exceeded supply, forcing users to pay
non-negligible transaction fees in exchange for transaction inclusion.
Block producers were then incentivized to publish a block with the
maximum-possible sum of transaction fees.  In this case, if all
transactions are visible to all block producers (e.g., in a public
mempool), the value of a block production opportunity would remain the
same across potential block producers (namely, the block reward plus
the revenue-maximizing packing of pending transactions). If not all
block producers are aware of the same transactions, however---perhaps
because some transactions were submitted privately to one or a subset
of potential block producers---different block producers might be able
to extract differing amounts of revenue from the same block production
opportunity.

The rise of decentralized finance (``DeFi'') and the consequent
opportunities for value extraction from the application layer
(``miner/maximal extractible value,'' or ``MEV'') appears to have
exacerbated the gap in revenue that can be earned by the most and
least capable block producers~\cite{BPP,Pai_Resnick}.  A block producer with
exclusive access to high-value DeFi transactions and a proprietary and
computationally intensive algorithm to assemble them into blocks will
generally earn much more from a block production opportunity than, for
example, a hobbyist running the protocol's reference client on their
home computer .

\vspace{-.1in}

\paragraph{The centralizing forces of heterogeneity.}
Why does it matter? One concern is that heterogeneity in information
and skill across potential block producers could lead to
``centralization,'' with the blockchain protocol ultimately run by
only a very small number of the most skilled block producers (perhaps
operated by some of the world's largest financial institutions).  For
example, Buterin~\cite{endgame} writes: ``Block production is likely
to become a specialized market.''  One possible intuition for this
prediction is economic: participating in a protocol carries a cost
(e.g., in a proof-of-stake protocol, the opportunity cost of staked
capital and/or the cost of operating one or more validators) and
perhaps only those with the highest return-on-investment will find it
profitable. A different intuition stems from long-run dynamics: the
block producers that earn the highest rewards will be in a position to
increase their control over the protocol (e.g., by reinvesting rewards
as additional stake) until none of the other block producers matter.
In any case, much of the motivation behind the design of
permissionless blockchain protocols like Bitcoin and Ethereum is
exactly to avoid this type of centralization.

\vspace{-.1in}

\paragraph{Confining heterogeneity to block-building.}
How could one encourage a large set of diverse participants to
contribute to the operation of a blockchain protocol, despite what
would seem to be strong forces pushing toward centralization? One
widely-discussed idea in the Ethereum ecosystem is ``proposer-builder
separation (PBS)''~\cite{pbs_spec}, in which the role of assembling a
(presumably high-revenue) block of transactions is split out from the
role of actually participating in the blockchain protocol (validating
and proposing blocks, voting on other proposed blocks, etc.).  The
intuition is that block-building is the part of the block production
pipeline that benefits from specialization (private knowledge of
transactions, proprietary block-building algorithms, etc.), with a relatively
level playing field for everything else (publishing blocks already
assembled by builders, voting, etc.).\footnote{In the language
  of~\cite{eobp1}, PBS can be interpreted as an approach to turn
  ``active'' block producers (that care about the semantics of the
  transactions in a block) into ``passive'' proposers (that don't).  Block
  builders would then be regarded as the ``active'' participants.}
Quoting Buterin~\cite{endgame} again, on the subject of PBS:
``This ensures that at least any centralization tendencies in block
production don't lead to a completely elite-captured and concentrated
staking pool market dominating block validation.''\footnote{A separate
  concern with centralized block-building is censorship-resistance
  (i.e., preventing the potentially small number of builders from
  systematically excluding certain types of transactions). This is
  obviously an important issue, but it is outside the scope of this
  paper.}

One possible implementation of this idea, which is roughly the
implementation in Flashbots's MEV-Boost~\cite{mevboost}, is for
block-builders to submit bids along with blocks. If a block from a
builder is published by a proposer, the corresponding bid is
transferred from the builder to the proposer. (Presumably, the builder
extracts enough value from the block, via transaction fees and/or
application-layer value, to cover its bid to the proposer, perhaps
plus a premium.) The hope, then, is that builders are good enough at
their jobs that no proposer would be able to improve over the obvious
strategy of publishing the block accompanied by the highest bid.  If
all of the block proposers followed this strategy and always knew
about the full set of builder-submitted blocks, then the reward earned
by a block production opportunity would once again be independent of
the selected proposer. Ideally, restoring such homogeneity would allow
for a large and diverse set of proposers.

\vspace{-.1in}

\paragraph{Goal of this paper.} The motivation and objectives for
proposer-builder separation described above rest on plausible but
largely unsubstantiated beliefs, the rigorous interrogation of which
is the goal of this paper.
\begin{enumerate}

\item Does heterogeneity in skills and knowledge across block
  producers inevitably lead to centralization?

\begin{enumerate}

\item Is it economic forces that lead to centralization? If so, to
  what extent?

\item Do long-run dynamics lead to centralization? If so, how quickly?

\end{enumerate}

\item To what extent does proposer-builder separation reduce heterogeneity and
  preserve decentralization among proposers? How does the answer
  depend on the competitiveness of the builder ecosystem?

\end{enumerate}

\subsection{Overview of Results}

This paper develops mathematical models and results that offer answers
to all of the questions above:
\begin{enumerate}

\item Section~\ref{s:econ} addresses question~1(a) through a
  game-theoretic model with endogenous staking, heterogeneous block
  producer rewards, and staking costs.  Building on connections to
  Tullock contests, the main result (Theorem~\ref{t:BP_comp})
  quantifies the extent to which heterogeneous rewards lead to
  concentration in the equilibrium staking distribution.

\item Section~\ref{s:prob} investigates question~1(b) using a
  stochastic process in which heterogeneous block producers repeatedly
  reinvest rewards into staking. Building on connections to P\'olya
  urns and Yule processes, the main result (Theorem~\ref{t:bounds})
quantifies, as a function of the block producer heterogeneity, the
rate at which stake concentrates on the most sophisticated block
producers.

\item Section~\ref{s:pbs} studies question~2 in a model with
  heterogeneous proposers and specialized builders, with proposers
  optionally constructing their own blocks on the side. Building on
  connections to auction theory, the main result (Theorem~\ref{t:pbs})
  quantifies, as a function of the competitiveness of the builder
  ecosystem, the extent to which PBS reduces the heterogeneity in
  rewards across different proposers.

\end{enumerate}
Our results clarify the extent to and the assumptions under which the
conventional wisdom around centralization in block-building and
PBS is correct. For example, our analysis in Section~\ref{s:econ}
shows that economic forces generally lead to an oligopolistic
equilibrium outcome rather than a naive ``winner-take-all''
scenario. For another example, our analysis in Section~\ref{s:pbs}
shows that conventional intuition around PBS breaks down if the
distribution of block values is sufficiently heavy-tailed.

Our results also provide quantitative predictions that would be
impossible without concrete mathematical models. For example, our
analysis in Section~\ref{s:econ} suggests that if, say, there at
least~10 block producers that are at least 90\% as good at extracting
value as the most sophisticated block producer, then at equilibrium no
block producer will control more than roughly 17.5\% of the stake.
For another example, our analysis in Section~\ref{s:prob} suggests
that even modest constant-factor decreases in the performance gap
between the best and second-best block producers can greatly slow down
the rate of stake concentration.

\subsection{Related Work}

\paragraph{Tullock contests.}
Outside the world of blockchains, considerable effort has gone into
understanding the equilibria of games in which strategic agents must
invest resources to compete for a fixed prize. This type of game,
termed a \emph{Tullock Contest}, was first studied by Tullock
\cite{tullock} for the case of homogeneous agents. That model was then
extended by Hillman and Riley \cite{HR} and Gradstein
\cite{gradstein1995} for the case with heterogeneous agents that faced
different costs of investment. In a separate but closely related line
of work, Johari and Tskitlis \cite{JT04} and Rougharden~\cite{icm}
study equilibria in resource allocation games. Here there is a fixed
amount of resource to be distributed and agents make bids for
different shares. These models have since been ported to blockchains
in \cite{ArnostiW22,Dimitri17,alsabah2020pitfalls,Buddish18} to
understand how much miners invest in hardware and energy at
equilibriun in proof of work blockchains. In particular, Arnosti and
Weinberg \cite{ArnostiW22} and Alsabah and Capponi
\cite{alsabah2020pitfalls} demonstrate that, at equilibirum, the
market share of block production becomes centralized amongst the
miners that can invest at the cheapest cost.  One difference between
these works and our analysis in Section \ref{s:econ} is that they focus
on heterogeneity in cost of investment while keeping the reward fixed
for all miners, while our model has block producers that earn 
different rewards but with identical costs. A second is our emphasis
on parameterized definitions of and sufficient conditions for
decentralization (Definition~\ref{d:gammak} and
Theorem~\ref{t:BP_comp}).

\paragraph{Proof of stake.}
Moving to proof of stake protocols,
there have been many works investigating possible
\emph{rich get richer} phenomena,
the worry being that block producers who start with a large
fraction of the stake will grow their advantage and eventually control
all of the stake. Work by Rosu and Saleh \cite{Rosu21} shows that in a
model in which all block producers earn the same rewards, with these
rewards relatively small compared to the absolute amount of stake, 
the ratio of stake controlled by different block producers follows a
martingale, typically keeping them at essentially the same ratio over
the long run. Additional work by Huang et al.~\cite{rich_get_richer}
and Tang~\cite{tang2023trading,tang2022stability} reach similar
conclusions in somewhat different models of proof of stake protocols. 
Fanti et al. \cite{FantiKORVW19} introduce a notion of equitability that quantifies whether or not block producers maintain the same share of stake over time and they study how different reward functions impact this metric. 
In all of these works, the reward earned
by a block production opportunity is independent of the block producer
(i.e., there is no block producer heterogeneity, other than their
current stake amounts).

Our work here shows that, by contrast, with heterogeneous block producers,
a \emph{rich get richer} effect does indeed occur, with the
block producers capable of earning the highest rewards (per block
production opportunity) eventually controlling a dominant
fraction of the total stake.

\paragraph{P\'olya urns.} 
P\'olya urns are important for our analysis in
Section~\ref{s:prob} of the long run behavior of the
staking process. 
As noted in several earlier works (without BP
heterogeneity)~\cite{FantiKORVW19,rich_get_richer,Rosu21,tang2022stability,tang2023trading}, there is a strong resemblance between the
P\'olya urn setup of repeatedly drawing a ball randomly from
an urn and replacing that ball with more balls of the same color
and a process in which block producers repeatedly reinvest all earned
rewards into staking.
P\'olya urns were
first formally introduced by Eggenberger and P\'olya in \cite{polya}
and there is now a large literature on the topic.
Of particular interest
to us is a technique 
of Athreya and Karlin \cite{athreya1968embedding}
for embedding a discrete P\'olya scheme into a
continuous P\'olya process.
Janson \cite{janson} built on this
work by using the continuous-time process to prove results about the
limiting distribution for a wide variety of P\'olya urn models,
including models with color-specific replacement parameters that
correspond to the non-uniform reward rates of heterogeneous BPs in our
model.
Translated to our model, one of Janson's results implies that a block
producer with a consistent advantage in earning rewards over the other
block producers will eventually, in the limit, control all of the stake.
Our analysis here builds on Janson's result to quantify exactly how
quickly, as a function of the size of the BP's advantage, we can
expect this concentration to occur.

\section{The Basic Model}\label{s:model}

The focus of this work is on the centralizing effects of block
producer heterogeneity. To isolate this issue, this section defines
what is arguably a minimal model that allows for its study.

We consider a finite set~$I=\{1,2,\ldots,n\}$ of {\em block producers
  (BPs)}. Each BP~$i \in I$ is characterized by a {\em reward
  multipler~$\mu_i \ge 1$}; BPs with higher $\mu_i$'s earn more from
producing a block than those with lower $\mu_i$'s.
%
%
%
Specifically, we consider a block production opportunity (e.g., a slot
in the Ethereum protocol) with ``base reward'' equal to some
value~$r > 0$ and assume that, if the BP~$i$ is the one selected to
take advantage of this opportunity (e.g., by winning a proof-of-stake
lottery for that slot), then it earns $\mu_i r$ from it.  We do not
model any details of the source(s) of these rewards, which could
include a block reward, transaction fees (as computed by some
transaction fee mechanism), and/or value derived from the application
layer (i.e., ``MEV'').  We assume that a BP is chosen for a block
production opportunity with probability proportional to the amount of
the blockchain's native currency that it has staked at that
time. Thus, the expected reward earned by BP~$i$ for a given block
production opportunity is
\[
\frac{\pi_i}{\sum_{j \in I} \pi_j} \cdot \mu_i \cdot r,
\]
where the~$\pi_i$'s denote the BPs' current stakes.
We can assume, without loss of generality,
that~$\mu_1 \ge \mu_2 \ge \cdots \ge \mu_n = 1$ (reindexing and
redefining~$r$, as necessary).  The largest multiplier~$\mu_1$ can
then be interpreted as one simple measure of the ``degree of BP
heterogeneity.''

Asking ``does BP heterogeneity lead to centralization?'' can then be
investigated mathematically through questions of the form ``is the
staking distribution ultimately dominated by the BPs with the largest
$\mu_i$'s?''  Section~\ref{s:econ} probes this question
game-theoretically, with the ``ultimate staking distribution''
referring to the equilibrium stake distribution in a one-shot game
with endogenous staking. Section~\ref{s:prob} studies the question
from a dynamic (but non-strategic) perspective, with the ``ultimate
staking distribution'' corresponding to the long-run distribution of
BPs' stake following a long sequence of block production
opportunities.

\vspace{-.1in}

\paragraph{Discussion.}
In this paper, we take the $\mu_i$'s as given and deliberately avoid
microfounding the reason(s) for BP heterogeneity. Plausible reasons
why one BP might have a higher multiplier than another include:
\begin{enumerate}

\item One BP may know about more transactions than another (e.g.,
  transactions submitted privately to it rather than to the public
  mempool), and is therefore in a position to earn higher transaction
  fees and/or additional value from the application layer.

\item One BP may have a better block-building algorithm than another
  and is therefore capable of constructing higher-reward blocks.

\item One BP may be better positioned to profit from the transactions
  in a given block than another (e.g., depending on long or short
  positions held by the BP on a centralized exchange for assets traded
  in those transactions).

\end{enumerate}

In the basic model considered here (and in Sections~\ref{s:econ}
and~\ref{s:prob}), we treat a block producer as a single entity acting
unilaterally.  In practice, especially in the Ethereum ecosystem,
block production can involve ``searchers'' (who identify opportunities
for extraction from the application layer), ``builders'' (who assemble
such opportunities into a valid block), and ``proposers'' (who
participate directly in the blockchain protocol and make the final
choice of the published block).  One interpretation of a block
producer in our basic model is as a vertically integrated searcher,
builder, and proposer (e.g., as in the Ethereum ecosystem circa~2020).
Section~\ref{s:pbs} considers the ramifications of ``proposer-builder
separation,'' the more contemporary scenario in which proposers and
builders (or more precisely, integrated searcher-builders) are
separate entities.

\section{Economic Forces Toward Centralization}\label{s:econ}

\subsection{Competition Between Heterogeneous Block Producers}

To investigate economic forces toward centralization, we next extend
the basic model in Section~\ref{s:model} into a game of complete
information in which utility-maximizing BPs compete for rewards
through their choice of stake amounts. We assume that there is a fixed
(per-unit) cost of staking~$c$ (e.g., due to the opportunity cost of
capital and/or the operating costs of running one or more validators),
and that BPs act to maximize expected rewards earned less costs
incurred.  Thus, the strategy of a BP~$i$ is its choice of stake
amount~$\pi_i$, and its utility function is
\[
U_i(\pi_i; \pi_{-i}) = \mu_i \cdot r \cdot x_i(\pi_i; \pi_{-i}) - c
\cdot \pi_i,
\]
where $x_i(\pi_i; \pi_{-i}) = \pi_i/\sum_{j \in I} \pi_j$ denotes
$i$'s fraction of the overall stake.  Observe that staking offers
diminishing returns: for every non-zero nonnegative vector~$\pi_{-i}$,
the winning probability~$x_i$ (and hence the utility function~$U_i$)
of BP~$i$ is strictly concave in~$\pi_i$. 

An {\em equilibrium} of this game is then a vector
$\hat{\pi}=(\hat{\pi}_1,\ldots,\hat{\pi}_n)$ of staking amounts such
that every BP~$i$ chooses a best response to the
strategies~$\hat{\pi}_{-i}$ of the other BPs:
\begin{equation}\label{eq:eq}
\hat{\pi}_i \in \arg\max_{\pi_i \ge 0} U_i(\pi_i; \hat{\pi}_{-i}).
\end{equation}
Because of strict concavity, for every non-zero and nonnegative
vector~$\pi_{-i}$, the maximizer on the right-hand side
of~\eqref{eq:eq} is unique.  The diminishing returns to staking are
the fundamental reason why the equilibrium will be more complex
than a simple ``winner-take-all'' outcome, with multiple BPs staking
(different) non-zero amounts.

One advantage of this formalism is that it connects directly to a
well-studied economic model known as a {\em Tullock
  contest}~\cite{tullock} and its generalization to heterogeneous
preferences by Hillman and Riley~\cite{HR}.  For
example, it is known that there is always a unique equilibrium in this
model. One way to see this fact is through the following result, due
to Johari and Tsitsiklis~\cite{JT04} in an equivalent model, which
connects our model to the theory of ``potential games''~\cite{MS,icm}
by characterizing equilibria as the maximizers of a strictly concave
optimization problem.

Precisely, call a vector~$\hat{\bfx}=(\hat{x}_1,\ldots,\hat{x}_n)$ an
{\em equilibrium allocation} if it is induced by some equilibrium
staking vector (i.e., for some equilibrium~$\hat{\pi}$, $\hat{x}_i =
x_i(\hat{\pi}_i, \hat{\pi}_{-i})$ for every~$i \in I$).  Then:
\begin{prop}[Characterizing Equilibria as Optima~\cite{JT04}]\label{prop:jt}
For every sequence $\mu_1 \ge \mu_2 \ge \cdots \ge \mu_n = 1$ of
reward multipliers, every base reward~$r$, and every cost parameter~$c$, a vector
$\hat{\bfx}$ is an equilibrium allocation if and only if it is a
solution to the following optimization problem:
\begin{equation}\label{eq:opt1}
\max \sum_{i \in I} \mu_i \cdot \frac{r}{c} \cdot x_i \cdot \left(1 -
  \frac{x_i}{2} \right)
\end{equation}
subject to
\begin{align}\label{eq:opt2}
\sum_{i \in I} x_i & =1\\ \label{eq:opt3}
x_i & \ge 0 \qquad\quad\text{for all $i \in I$.}
\end{align}
\end{prop}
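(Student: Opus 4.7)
The plan is to prove the proposition via the standard potential-function route for Tullock-style contests: identify the game's best-response conditions with the KKT conditions of the concave program~\eqref{eq:opt1}--\eqref{eq:opt3}, using the total equilibrium stake $\Pi := \sum_j \pi_j$ as the Lagrange multiplier of the constraint~\eqref{eq:opt2}. The key algebraic fact enabling this is $\partial x_i/\partial \pi_i = (1-x_i)/\Pi$, which, after multiplying each own-FOC through by $\Pi$, turns a system of conditions on the stake levels into one that refers only to the shares $x_i$ and the single scalar $\Pi$.

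First, I would derive the best-response characterization in the game. Since $U_i(\cdot;\pi_{-i})$ is strictly concave (as already noted in the paper), the best response is unique for every nonzero $\pi_{-i}$, and a direct computation gives $\partial U_i/\partial \pi_i = \mu_i r (1-x_i)/\Pi - c$. Hence at any equilibrium $\hat\pi$ with $\Pi := \sum_j \hat\pi_j > 0$, for each $i \in I$ either (a) $\hat x_i > 0$ and $\mu_i r(1-\hat x_i) = c\Pi$, or (b) $\hat x_i = 0$ and $\mu_i r \le c\Pi$. A short preliminary argument rules out $\Pi = 0$: the marginal utility of stake diverges as $\Pi \to 0^+$, so any BP (in particular BP~$n$ with $\mu_n \ge 1$) would strictly gain by unilaterally moving to a small positive stake.

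Next, I would write the KKT conditions for the program. Because each summand $(r/c)\mu_i x_i(1-x_i/2)$ is strictly concave in $x_i$, the objective~\eqref{eq:opt1} is strictly concave on the compact simplex defined by \eqref{eq:opt2}--\eqref{eq:opt3}, so the program admits a unique maximizer $\hat{\bfx}$. Introducing a multiplier $\lambda$ for the equality constraint and multipliers $\nu_i \ge 0$ for the nonnegativity constraints, the KKT conditions reduce to: for each $i$, either (a$'$) $\hat x_i > 0$ and $\mu_i (r/c)(1-\hat x_i) = \lambda$, or (b$'$) $\hat x_i = 0$ and $\mu_i (r/c) \le \lambda$.

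Conditions (a)--(b) and (a$'$)--(b$'$) are identical under the identification $\lambda = \Pi$. Hence, given an equilibrium $\hat\pi$, its induced shares $\hat{\bfx}$ satisfy the KKT conditions with $\lambda = \Pi$ and so solve the program; conversely, given the unique optimizer $\hat{\bfx}$ with multiplier $\lambda$, defining $\hat\pi_i := \lambda \hat x_i$ yields $\sum_j \hat\pi_j = \lambda$, so that (a$'$)--(b$'$) translate back verbatim into the game's best-response conditions at $\hat\pi$. The main subtlety will be bookkeeping around the inactive BPs (those with $\hat x_i = 0$) and verifying that the implicit relation $\lambda = \Pi$ is a genuine self-consistent identity rather than circular reasoning; strict concavity on both sides, together with compactness of the simplex, makes this step routine rather than delicate.
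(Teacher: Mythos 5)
Your proposal is correct and takes essentially the same route the paper intends (and attributes to Johari and Tsitsiklis): match the first-order conditions of each BP's best-response problem, rewritten in terms of allocation shares via $\partial x_i/\partial \pi_i = (1-x_i)/\Pi$, with the KKT conditions \eqref{eq:kkt1}--\eqref{eq:kkt2} of the strictly concave program, identifying the multiplier $\lambda$ with the total equilibrium stake $\Pi$. Your explicit inverse map $\hat\pi_i = \lambda \hat x_i$ for the converse direction is precisely the translation the paper alludes to when it calls the proof ``mechanical.''
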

The proof is mechanical (see~\cite{JT04} for details): the first-order
conditions for the best-response problems~\eqref{eq:eq}, when
translated from staking vectors to allocation vectors, match the
first-order optimality conditions for the optimization
problem~\eqref{eq:opt1}--\eqref{eq:opt3}.  Because the optimization
problem is strictly concave, these conditions also characterize its
(unique) global optimum.  (This optimum exists because the
optimization problem has a continuous objective function and a compact
feasible region.)  We can therefore write $\hat{\bfx}(\mu,r,c)$ for
the unique equilibrium allocation with multipliers~$\mu$, base
reward~$r$, and cost parameter~$c$.  
Uniqueness of the equilibrium
allocation~$\hat{\bfx}$ also easily implies the uniqueness of the
equilibrium staking vector~$\hat{\pi}$.

With this setup, and identifying ``centralization'' with concentration
of the equilibrium staking distribution, we now have a concrete way to
quantify the extent to which heterogeneity across competing BPs leads
to centralization: as a function of the heterogeneity in the reward
multiplier sequence (the~$\mu_i$'s), how large are the largest
components of the corresponding equilibrium allocation vector (the
$\hat{x}_i$'s)?

\subsection{Quantifying the Largest Market Share}

How much does heterogeneity matter? If all the $\mu_i$'s are the same,
we would expect all the equilibrium allocations to also be the same
(with each BP staking a~$1/n$ fraction of the total). If the $\mu_i$'s
are different, we might expect BPs with higher $\mu_i$'s to be more
motivated to participate in staking and wind up
with larger equilibrium allocations (and indeed, this follows easily
from Proposition~\ref{prop:jt}). But how big does the variance in the
$\mu_i$'s need to be before the first BP acquires, at equilibrium, a
concerning fraction of the overall stake?

The following definition provides one parameterization of how much better
the best BP is at reward extraction than the rest.

\begin{definition}[$(\gamma,k)$-competitive]\label{d:gammak}
  A block producer set is {\em $(\gamma,k)$-competitive} if
  $\mu_{k+1} \ge \gamma \cdot \mu_1$, or, equivalently, if there are at
  least $k$ block producers that have a reward multiplier that is at
  least a $\gamma$ fraction of the largest multiplier.
\end{definition}

The main result in this section characterizes the largest-possible
equilibrium allocation of a BP, parameterized by~$\gamma$
and~$k$.\footnote{For each~$k \in
  \{1,2,\ldots,n-1\}$, there is some maximum~$\gamma_k$ for which a BP
  set (of size~$n$) is~$(\gamma_k,k)$-competitive (namely,
  $\gamma_k = \mu_{k+1}/\mu_1$).
  Theorem~\ref{t:BP_comp}(a) applies to all~$n-1$ of these choices
  for~$(\gamma_k,k)$, and in particular to the choice that minimizes
  the upper bound $1 - \gamma \cdot \tfrac{k}{k+\gamma}$.}
Recall that $\hat{\bfx}(\mu,r,c)$ denotes the unique equilibrium
allocation (i.e., market shares) guaranteed by
Proposition~\ref{prop:jt}.

\begin{theorem}[Characterization of the Maximum Market Share]\label{t:BP_comp}
Fix a base reward~$r$ and a staking cost~$c$.
\begin{itemize}

  \item [(a)] For every $\gamma \in [0,1]$ and $k \ge 1$,
for every $(\gamma,k)$-competitive BP set with multipliers $\mu_1 \ge
\mu_2 \ge \cdots \ge \mu_n = 1$,
\[
\hat{x}_i(\mu,r,c) \le 1 - \gamma \cdot \frac{k}{k+\gamma}
  \]
for every BP~$i \in I$.

  \item [(b)] For every $\gamma \in [0,1]$ and $k \ge 1$,
there exists a $(\gamma,k)$-competitive BP set with multipliers $\mu_1
\ge \mu_2 \ge \cdots \ge \mu_n = 1$ such that
\[
\hat{x}_1(\mu,r,c) = 1 - \gamma \cdot \frac{k}{k+\gamma}.
  \]
\end{itemize}
\end{theorem}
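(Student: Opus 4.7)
The plan is to analyze the equilibrium via the KKT conditions for the strictly concave optimization problem in Proposition~\ref{prop:jt}. Letting $S'$ denote the active set (BPs with $\hat{x}_i > 0$), the first-order conditions yield a single Lagrange multiplier $\lambda$ with $\mu_i(1 - \hat{x}_i) = \lambda$ for $i \in S'$ and $\mu_i \le \lambda$ for $i \notin S'$. BP~1 must be active (if no BP were active, $\sum_i \hat{x}_i = 0 \ne 1$, and any active BP with $\mu_i < \mu_1$ would force $\lambda < \mu_1$, which would in turn make BP~1 active); moreover, from $\hat{x}_i = 1 - \lambda/\mu_i$ it is clear that $\hat{x}_1$ dominates every other $\hat{x}_i$. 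So both parts of the theorem reduce to analyzing $\hat{x}_1 = 1 - \lambda/\mu_1$, and part~(a) becomes the inequality $\lambda/\mu_1 \ge \gamma k/(k+\gamma)$.

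For part~(a), I would split on whether every BP in the ``elite group'' $\{2,\ldots,k+1\}$ is active. If some $j \in \{2,\ldots,k+1\}$ is inactive, then $\lambda \ge \mu_j \ge \mu_{k+1} \ge \gamma\mu_1$, which already gives $\lambda/\mu_1 \ge \gamma \ge \gamma k/(k+\gamma)$. Otherwise $\{1,\ldots,k+1\} \subseteq S'$, and I would use the budget identity $\lambda \sum_{j \in S'} 1/\mu_j = |S'| - 1$ obtained by summing $\hat{x}_j = 1 - \lambda/\mu_j$ over $S'$. I would bound the sum on the left using two facts: $\mu_j \ge \gamma\mu_1$ for $j = 2,\ldots,k+1$ (from $(\gamma,k)$-competitiveness), and $\mu_j \ge \lambda$ for every extra active BP $j \in S' \setminus \{1,\ldots,k+1\}$ (from the active-set condition). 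The key observation is that each extra BP contributes at most $1/\lambda$ to $\sum 1/\mu_j$ and exactly $1$ to $|S'|-1$; after multiplying by $\lambda$, these ``extra'' contributions cancel on the two sides, collapsing the identity to $(\lambda/\mu_1)(1 + k/\gamma) \ge k$, which is exactly the target bound.

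For part~(b), I would exhibit the witnessing family $n = k+1$, $\mu_1 = 1/\gamma$, $\mu_2 = \cdots = \mu_{k+1} = 1$, which is tautologically $(\gamma,k)$-competitive since $\mu_{k+1}/\mu_1 = \gamma$. Solving the KKT system directly gives $\lambda = k/(k+\gamma) < 1$, so all BPs are active, and $\hat{x}_1 = 1 - \lambda/\mu_1 = 1 - \gamma k/(k+\gamma)$ matches the upper bound from part~(a). The main obstacle is the case-2 cancellation in part~(a): the bound $1/\mu_j \le 1/\lambda$ for extras threatens to introduce a circular dependence on the very quantity $\lambda$ being bounded, and it is the exact cancellation between the $m \cdot 1/\lambda$ contribution to $\sum 1/\mu_j$ and the $m$ extra units of $|S'|-1$ that makes the inequality depend only on the elite group and hence tight.
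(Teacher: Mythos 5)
Your proof is correct. Part~(b) coincides with the paper's argument: the same witnessing instance $\mu_1=1/\gamma$, $\mu_2=\cdots=\mu_{k+1}=1$, verified against the optimality conditions with $\lambda=k/(k+\gamma)$. For part~(a), however, you take a genuinely different route. The paper proves a monotonicity lemma (lowering any $\mu_j$ with $j\neq i$ can only increase $\hat{x}_i$), uses it to replace the given instance by the extremal one $\mu'=(\mu_1,\gamma\mu_1,\ldots,\gamma\mu_1,0,\ldots,0)$, and then reads the bound off the part~(b) computation. You instead work directly with the KKT system of the given instance: the case split on whether the elite group $\{2,\ldots,k+1\}$ is fully active, the budget identity $\lambda\sum_{j\in S'}1/\mu_j=|S'|-1$, and the cancellation of the ``extra'' active BPs (each contributes at most $1/\lambda$ to the sum and exactly one unit to $|S'|-1$) together yield $\lambda/\mu_1\ge\gamma k/(k+\gamma)$. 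Your worry about circularity is resolved exactly as you say: for an active extra $j$ one has $\hat{x}_j=1-\lambda/\mu_j>0$, hence $\mu_j>\lambda$, and the $m/\lambda$ term is multiplied by $\lambda$ before comparison, so it cancels against the $m$ on the other side. The trade-off is that the paper's route is shorter once the monotonicity lemma is in hand and makes the extremal structure transparent, while yours is self-contained (no separate lemma needed) and handles the bound for every BP~$i$ explicitly via $\hat{x}_i=\max\{0,1-\lambda/\mu_i\}\le 1-\lambda/\mu_1$. Two cosmetic points: state that you absorb the common factor $r/c$ into $\lambda$ when writing the stationarity conditions, and note that both your part~(b) and the paper's implicitly require $\gamma>0$ (the construction degenerates at $\gamma=0$).
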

For example, if~$\gamma=1$---and so there is a $(k+1)$-way tie for the
maximum multiplier---no BP is responsible for more than a~$1/(k+1)$
fraction of the total stake at equilibrium.
If~$\gamma = \tfrac{1}{2}$ and~$k=1$, the BP most capable of reward
extraction could have as much as 67\% of the overall stake at
equilibrium; if~$\gamma=\tfrac{1}{2}$ and~$k$ is large, that BP might
control as much as (slightly more than) 50\% of the stake.  In
general, the maximum equilibrium allocation is small 
if and only if there are several BPs
that are nearly as capable as the best BP at reward extraction (e.g.,
10 BPs with multipliers within 90\% of the maximum would guarantee a
maximum market share of 17.5\%); see also Figure~\ref{f:econ}.
The bad news, then, is that even a
modest amount of heterogeneity across the highest-skill BPs can lead
to a worrying concentration of stake. The good news is that, if there
were some way to severely limit the variation in reward multipliers
(the topic of Section~\ref{s:pbs}), then, as long as there are at
least a handful of BPs, decentralization would be approximately
preserved.

\begin{figure}
\begin{center}
\includegraphics[width=.75\textwidth]{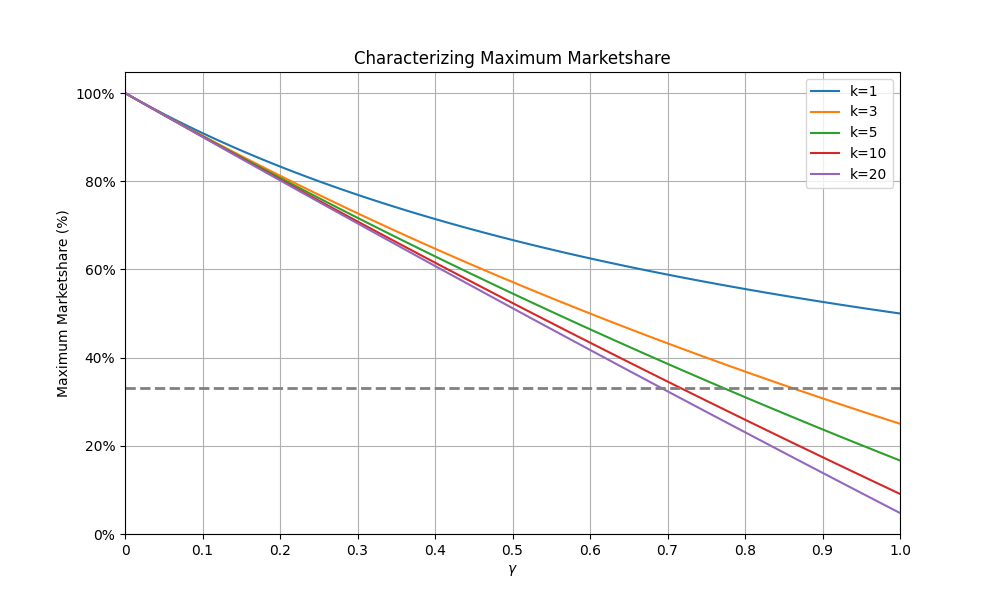}
\caption{The maximum equilibrium market share of any BP with a BP set
  that is $(\gamma,k)$-competitive in the sense of
  Definition~\ref{d:gammak}, as characterized by Theorem~\ref{t:BP_comp}.
The horizontal dotted line corresponds to a market share of 33\%,
which is a critical security threshold for many proof-of-stake blockchain protocols.}\label{f:econ} 
\end{center}
\end{figure}

We next turn to the proof of Theorem~\ref{t:BP_comp}. We require the
following monotonicity lemma, which follows from basic properties
of separable concave maximization problems (like the one in
Proposition~\ref{prop:jt}). The details are left to the appendix.

\begin{lemma}[First Monotonicity Lemma]\label{l:mono1}
For a BP set~$I$, base reward~$r$, and cost parameter~$c$, let~$\mu$
and~$\mu'$ denote two nonnegative reward multiplier
vectors that are identical except that, for some~$j \in I$, $\mu'_j >
\mu_j$.  Then, for every~$i \neq j$, $\hat{x}_i(\mu',r,c) \le \hat{x}_i(\mu,r,c)$.
\end{lemma}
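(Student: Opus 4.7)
The plan is to reduce to the separable concave program identified in Proposition~\ref{prop:jt} and then read off the monotonicity from its KKT conditions. Write the objective in~\eqref{eq:opt1} as $\sum_i f_i(x_i)$, where $f_i(x_i) := \mu_i (r/c)\, x_i(1 - x_i/2)$ is strictly concave with $f_i'(x_i) = \mu_i (r/c)(1 - x_i)$. Because the only equality constraint is $\sum_i x_i = 1$ and the remaining constraints $x_i \ge 0$ are simple bounds, there is a unique Lagrange multiplier $\lambda \ge 0$ such that $\hat{x}_i > 0$ implies $f_i'(\hat{x}_i) = \lambda$ and $\hat{x}_i = 0$ implies $f_i'(0) \le \lambda$. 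Solving gives the closed form
\[
\hat{x}_i(\mu, r, c) \;=\; \max\!\left(0,\; 1 - \frac{\lambda c}{r \mu_i}\right),
\]
with $\lambda$ pinned down by the normalization $\sum_i \hat{x}_i = 1$.

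Next I would introduce the feasibility function $g(\lambda; \mu) := \sum_i \max(0,\, 1 - \lambda c/(r \mu_i))$, which is continuous, weakly decreasing in $\lambda$, and strictly decreasing wherever positive, so $g(\lambda; \mu) = 1$ has a unique solution. When $\mu_j$ increases to $\mu'_j$, the $j$th summand $\max(0,\, 1 - \lambda c/(r \mu_j))$ is weakly increasing in $\mu_j$ at every fixed $\lambda$, while every other summand is unchanged; hence $g(\lambda; \mu') \ge g(\lambda; \mu)$ pointwise in $\lambda$. Monotonicity of $g$ in $\lambda$ then forces the new equilibrium multiplier $\lambda'$ to satisfy $\lambda' \ge \lambda$.

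Finally, for each $i \neq j$ the multiplier $\mu_i$ is unchanged, so $\max(0,\, 1 - \lambda c/(r \mu_i))$ is weakly decreasing in $\lambda$, and $\lambda' \ge \lambda$ yields $\hat{x}_i(\mu',r,c) \le \hat{x}_i(\mu,r,c)$, as desired. The main thing to be careful about is the corner-case bookkeeping around which BPs are active: a BP inactive at $\mu$ stays inactive at $\mu'$ (its threshold $\lambda c/r$ only rises), and previously active BPs with $i \neq j$ may drop out entirely, but the $\max$-with-zero in the closed form handles both regimes uniformly. I expect this bookkeeping to be the only real subtlety; the KKT setup and the pointwise monotonicity of $g$ in $\mu_j$ are routine once the closed form is in hand.
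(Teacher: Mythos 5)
Your proof is correct and follows essentially the same route as the paper's: both reduce to the KKT conditions of the separable concave program in Proposition~\ref{prop:jt} and observe that raising $\mu_j$ forces the common multiplier $\lambda$ upward, which can only shrink $\hat{x}_i$ for every $i \neq j$. Your version is actually more explicit than the paper's (which asserts the compensation step informally, while you pin it down via the closed form and the water-filling function $g$); the only cosmetic caveat is that $\max\left(0,\, 1-\lambda c/(r\mu_i)\right)$ needs the convention $\hat{x}_i=0$ when $\mu_i=0$ and $\lambda>0$, a case that does arise when the lemma is invoked in the proof of Theorem~\ref{t:BP_comp}(a).
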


With this lemma in hand, we can now prove Theorem~\ref{t:BP_comp}.

\vspace{.1in}

\begin{prevproof}{Theorem}{t:BP_comp}
We begin with part~(b). Fix~$r > 0$, $c > 0$, $\gamma \in [0,1]$, and
$k \ge 1$.  Consider a set of~$k+1$ BPs for which~$\mu_1 = 1/\gamma$
and $\mu_2 = \mu_3 = \cdots = \mu_{k+1} = 1$; this set is
$(\gamma,k)$-competitive.  We proceed to guess and
check the equilibrium allocation or, equivalently, the optimal
solution to~\eqref{eq:opt1}--\eqref{eq:opt3}. Set
\[
\hat{x}_1 = 1 - \gamma \cdot \frac{k}{k+\gamma}
\]
and
\[
\hat{x}_i = \gamma \cdot \frac{1}{k+\gamma}
\]
for each~$i=2,3,\ldots,k+1$.  This allocation is feasible (i.e.,
satisfies~\eqref{eq:opt2} and~\eqref{eq:opt3}) and satisfies the
optimality conditions in~\eqref{eq:kkt1} and~\eqref{eq:kkt2} with
$\lambda = k/(k+\gamma)$.  It is therefore an equilibrium allocation
vector, and the value of~$\hat{x}_1$ is as claimed.\footnote{The
  corresponding equilibrium staking distribution~$\hat{\pi}$ is
  proportional to the equilibrium allocation vector~$\hat{\bfx}$, with the
  scaling factor set (as a function also of~$r$ and~$c$) so that the
  equilibrium conditions in~\eqref{eq:eq} are satisfied.}

For part~(a), consider an arbitrary $(\gamma,k)$-competitive set
of~$n$ BPs, with reward multipliers~$\mu_1 \ge \mu_2 \ge \cdots \ge
\mu_n = 1$.  Define alternative reward multipliers~$\mu'$ as follows:
$\mu'_1 = \mu_1$, $\mu'_i = \gamma \cdot \mu_1$ for
$i=2,3,\ldots,k+1$, and $\mu'_i = 0$ for $i=k+1,k+2,\ldots,n$.
Because the given BP set is~$(\gamma,k)$-competitive, $\mu'_i \le
\mu_i$ for every~$i$.  The First Monotonicity Lemma
(Lemma~\ref{l:mono1}), applied once for each~$i=2,3,\ldots,n$, implies
that $\hat{x}_1(\mu,r,c) \le \hat{x}_1(\mu',r,c)$. (That lemma holds
for arbitrary nonnegative reward multipliers and does not require
that~$\mu'_n = 1$.)  By the proof of part~(b), $\hat{x}_1(\mu',r,c)
= 1 - (\gamma \cdot k)/(k+\gamma)$. (Additional BPs with $\mu_i=0$
do not change the equilibrium allocation.) Thus $\hat{x}_1(\mu,r,c)
\le 1 - (\gamma \cdot k)/(k+\gamma)$, as required.
\end{prevproof}

\section{Long-Run Forces Toward Centralization}\label{s:prob}

\subsection{The Staking Process}

We next extend the basic model of Section~\ref{s:model} in an
orthogonal direction, to investigate connections between BP
heterogeneity and centralization from a different angle.  In contrast
to the single-shot setup of Section~\ref{s:econ}, this section studies
the evolution over time of the stakes controlled by different BPs. We
assume here that BPs are non-strategic and always stake all of the
native currency that they possess (including reinvesting any earned
rewards back into staking).\footnote{Combining the strategic features of the
model in Section~\ref{s:econ} with the long-run dynamics studied in
this section is an interesting direction for future research.}

Precisely, we again consider a set~$I$ of~$n$ BPs with reward
multipliers $\mu_1 \ge \mu_2 \ge \cdots \ge \mu_n=1$ and a base
reward~$r$.  In addition, each BP~$i$ has a positive initial stake,
denoted~$\pi_{i,1}>0$. More generally, $\pi_{i,t}$ will denote the stake
controlled by BP~$i$ after~$t-1$ block production opportunities have
passed.  We consider the following probabilistic process, which we
call the {\em staking process}, 
for~$t=1,2,\ldots,$:

\begin{itemize}

\item For block production opportunity~$t$, one BP is chosen with
  probability proportional to the current BP stakes. That is, BP~$i$
  is chosen  with probability
  \[
    x_{i,t} = \frac{\pi_{i,t}}{\sum_{j \in I} \pi_{j,t}}.
\]
    
\item If BP $i^*$ is chosen for this opportunity, then the stakes
  evolve according to:
          \begin{equation*}
            \pi_{i,t+1}=
            \begin{cases}
              \pi_{i,t} + \mu_{i} \cdot r, & \text{if}\ i=i^* \\
              \pi_{i,t}  & \text{if}\ i\neq i^*.
            \end{cases}
        \end{equation*}
\end{itemize}
Thinking of~$r$ as fixed throughout this section, the staking process
is fully described by the reward multiplier vector
$\mu=(\mu_1,\ldots,\mu_n)$ and the initial stake vector
$\pi=(\pi_{1,1},\ldots,\pi_{n,1})$.\footnote{Extending the analysis of
  this section to time-varying rewards is an interesting and
  challenging direction for future work.}
We generally assume (without loss
of generality) that the entries of~$\mu$ are sorted in nonincreasing
order; we make no assumptions about~$\pi$.
One advantage of this setup is that it connects directly to
the theory of generalized P\'olya urn models (e.g., \cite{polya_textbook}).

We will be interested in the long-run behavior of the staking process
(as a function of~$\mu$ and~$\pi$), and in particular whether it
``centralizes'' in the sense that a single BP dominates the staking
distribution:
\begin{definition}[$\eps$-Centralization]\label{d:eps}
For a parameter~$\eps > 0$, the staking process is {\em
  $\epsilon$-centralized at block $t$} if, for some BP~$i$,
\[
  x_{i,t} \ge 1 - \eps.
  \]
\end{definition}
With this setup and notion of centralization, we now have another
concrete way to quantify the extent to which heterogeneity across
BPs leads to centralization: as a function of the heterogeneity in the reward
multiplier sequence (the~$\mu_i$'s), how quickly (if at all) does the
staking process become $\eps$-centralized?

It is natural to conjecture that, if~$\mu_1 > \mu_2$ and
$t \rightarrow \infty$, the first BP should, with probability~1, end
up with all but a
vanishingly small fraction of the overall stake (no matter what the
initial stake distribution is).  (Intuitively, the rate of increase in
the first BP's stake should outpace that of the others.)  This
conjecture is in fact true, although the proof is
not at all obvious (see Janson~\cite{janson}). Our concern here will
be on the {\em speed} with which the staking distribution concentrates.
The hope is that, provided there is not too much heterogeneity across
BPs (e.g., with~$\mu_2$ close to~$\mu_1$), stake concentrates
slowly, perhaps even slowly enough to not be a first-order
concern. Such a quantitative analysis is necessary to assess the
potential benefits of any design aimed to reduce BP heterogeneity (and
slow down centralization), such as the proposer-builder separation
idea described in Section~\ref{s:pbs}.

\subsection{Quantifying the Speed of Centralization}

We are now primarily interested in defining quantitative lower and
upper bounds on how many blocks it takes for the staking process to
become $\epsilon$-centralized. For ease of exposition we will refer to
the sum of all the BPs' stakes at block~$t$, not including BP~1, by
$\pi_{-1,t}=\sum_{j\in I,j\neq 1} \pi_{j,t}$. We omit the subscript on
time when referring to the starting stakes, writing
$\pi_1$ for $\pi_{1,1}$ and $\pi_{-1}$ for $\pi_{-1,1}$. 

\begin{theorem}[Bounds on Number of Blocks for $\epsilon$-Centralization]\label{t:bounds}
Let $\beta = \max\{\frac{\mu_1r}{\pi_1},\frac{\mu_2r}{\pi_{-1}}\}$ and
$\rho = \frac{\pi_{-1}}{\pi_1}$. Then for every $\epsilon > 0$:
\begin{enumerate}

\item (Upper bound on time to centralization) For every
\[t >
  \frac{3}{2\mu_2r}\left(\pi_1\left(\frac{3\rho(1-\epsilon)}{\epsilon}\right)^{\frac{\mu_1}{\mu_1-\mu_2}}+\pi_{-1}\left(\frac{3\rho(1-\epsilon)}{\epsilon}\right)^{\frac{\mu_2}{\mu_1-\mu_2}}\right),
\]
  we have
  $$\Pr\left(x_{1,t} < 1-\epsilon \right) < 8\beta.$$

\item (Lower bound on time to centralization) For every
\[  
t<
  \frac{1}{2\mu_1r}\left(\pi_1\left(\frac{\rho(1-\epsilon)}{3\epsilon}\right)^{\frac{\mu_1}{\mu_1-\mu_n}}
    +
    \pi_{-1}\left(\frac{\rho(1-\epsilon)}{3\epsilon}\right)^{\frac{\mu_n}{\mu_1-\mu_n}}
    - 2(\pi_1+\pi_{-1})\right),
\]
we
  have
  $$\Pr\left(x_{1,t} \ge 1-\epsilon \right) < 8\beta.$$
    
    \end{enumerate}
\end{theorem}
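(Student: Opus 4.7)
The plan is to replace the discrete staking dynamics by its classical Athreya--Karlin continuous-time embedding (the same tool used by Janson for the asymptotic version of this result). Assign to each BP $i$ an independent pure-birth process $S_i(T)$ with $S_i(0)=\pi_{i,1}$ that increments by $\mu_i r$ at rate $S_i(T)$; then $S_i(T)/(\mu_i r)$ is a Yule process with rate $\mu_i r$ started at $\pi_{i,1}/(\mu_i r)$, and because the superposition of independent exponential clocks picks each summand with probability proportional to its rate, the sequence of jump vectors $(S_1(T_t),\ldots,S_n(T_t))$ at the $t$-th pooled jump time $T_t$ has the same distribution as $(\pi_{1,t},\ldots,\pi_{n,t})$. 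It therefore suffices to prove the two claimed tail bounds in the continuous-time picture and transfer them back to discrete time by controlling $T_t$.

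From the negative-binomial marginal of a Yule process one reads off $\mathbb{E}[S_i(T)]=\pi_{i,1}e^{\mu_i r T}$ and $\mathrm{Var}(S_i(T))\le (\mu_i r/\pi_{i,1})\,\mathbb{E}[S_i(T)]^2$, so Chebyshev gives, for any $\delta\in(0,1)$, the two-sided deviation estimate $\Pr(|S_i(T)-\mathbb{E}[S_i(T)]|>\delta\,\mathbb{E}[S_i(T)])\le \mu_i r/(\delta^2\pi_{i,1})$; the analogous bound for the aggregate $\sum_{j\ne 1}S_j(T)$ is controlled by the fastest competitor's rate $\mu_2$ and the pooled initial stake $\pi_{-1}$. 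This is precisely where the parameter $\beta=\max\{\mu_1 r/\pi_1,\mu_2 r/\pi_{-1}\}$ enters, and a union bound over a constant number of such deviation events will supply the $8\beta$ in the statement. For part~1, I would choose the continuous time $T^\star=\frac{1}{(\mu_1-\mu_2)r}\log\frac{3(1-\epsilon)\rho}{\epsilon}$ with $\rho=\pi_{-1}/\pi_1$, arranged so that $\pi_1 e^{\mu_1 r T^\star}\ge \tfrac{3(1-\epsilon)}{\epsilon}\,\pi_{-1}e^{\mu_2 r T^\star}$; the slack factor $3$ is there to absorb the concentration errors so that on the good event $S_1(T^\star)/\sum_j S_j(T^\star)\ge 1-\epsilon$. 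The announced threshold for $t$ then comes from the identity $t=\sum_i(S_i(T_t)-\pi_{i,1})/(\mu_i r)$: at $T^\star$ this sum is, on the good event, at most $\tfrac{3}{2}\big(\tfrac{\pi_1}{\mu_1 r}e^{\mu_1 r T^\star}+\tfrac{\pi_{-1}}{\mu_2 r}e^{\mu_2 r T^\star}\big)$, and substituting the formula for $T^\star$ yields the first bound. Part~2 is the mirror image: I would lower-bound $\sum_{j\ne 1}S_j(T)$ pessimistically using rate $\mu_n$ (since the lower bound on centralization time must remain valid even in the worst case that the slowest non-top multiplier is all that keeps BP~1 from dominating), which is why $\mu_n$ rather than $\mu_2$ appears in the denominator of the second exponent.

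The most delicate step will be reconciling the continuous time $T^\star$ with the discrete step count $t$. The pooled jump count $N(T)=\sum_i(S_i(T)-\pi_{i,1})/(\mu_i r)$ is itself random and jointly dependent on all $n$ processes, so the event ``$T_t\ge T^\star$'' (needed for part~1) and its reverse (needed for part~2) require separate concentration estimates on $N(T^\star)$ that must hold simultaneously with the concentration bounds already used on $S_1$ and $\sum_{j\ne 1}S_j$. Arranging the constant slack factors --- the $3$ inside the logarithm and the prefactors $\tfrac{3}{2}$ and $\tfrac{1}{2}$ in front of the bracketed expressions --- so that each of the handful of deviation events contributes at most a constant multiple of $\beta$ (cumulatively $8\beta$) to the failure probability, while still collapsing to the stated closed-form expressions, is the technical heart of the argument; once that bookkeeping is in place, both inequalities reduce to routine algebra on the exponentials.
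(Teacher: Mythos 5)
Your continuous-time strategy is the same as the paper's (Athreya--Karlin embedding into independent Yule processes, Chebyshev on each process, a union bound giving the $8\beta$, and a conversion from continuous time back to a block count), and your mean/variance bounds and choice of $T^\star$ match the paper's $\hat t$ exactly. But there is a genuine gap at the step you yourself flag as the ``technical heart,'' and it is not merely a matter of bookkeeping constants: your conversion from $T^\star$ to a block count is wrong for general $n$. The exact identity $t=\sum_i (S_i(T_t)-\pi_{i,1})/(\mu_i r)$ means that a block won by a competitor with multiplier $\mu_j<\mu_2$ adds only $\mu_j r$ stake, so bounding the aggregate competitor stake $\sum_{j\neq 1}S_j(T^\star)$ does \emph{not} bound the competitor block count by (aggregate stake)$/(\mu_2 r)$ --- the correct divisor is $\mu_n r$, which would give a weaker threshold than the theorem states. (Concretely: if one competitor has $\mu_j$ close to $0$ and holds most of $\pi_{-1}$, its stake barely moves while it wins on the order of $\pi_{j,1}T^\star$ blocks; no multiplicative concentration bound on $S_j$ controls $(S_j-\pi_{j,1})/(\mu_j r)$.) The same issue, mirrored, afflicts your part~2.

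The paper closes this hole \emph{before} passing to continuous time: a coupling/monotonicity lemma (Lemma~\ref{l:monotone} and its Corollary~\ref{c:lower}) shows that replacing every competitor multiplier by $\mu_2$ (for the upper bound) or by $\mu_n$ (for the lower bound), while pooling their initial stakes into $\pi_{-1}$, only makes BP~1's share stochastically smaller (resp.\ larger) at every fixed block index $t$. After that reduction there are effectively two block producers, every competitor block adds exactly $\mu_2 r$ (resp.\ $\mu_n r$) stake, and your division by $\mu_2 r$ (resp.\ $\mu_1 r$ for the lower bound's conservative per-block increment) becomes valid; the rest of your argument then goes through essentially verbatim. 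So the missing ingredient is this discrete coupling reduction to the two-BP case --- without it, neither the clean treatment of the aggregate competitor process as a single rate-$\mu_2$ Yule process nor the stated block-count thresholds can be justified.
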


The proof of Theorem~\ref{t:bounds} can be found in the Appendix.
Observe that the parameter~$\beta$ that controls the probability
bounds is small provided the rewards earned by BPs are small relative
to their initial stakes.  We can see that the dominating factor in the
speed to $\epsilon$-centralization is the difference between $\mu_1$
and $\mu_2$. Reducing this gap causes an exponential increase in the
time it takes for the process to $\epsilon$-centralize. As a concrete
example, 
suppose that BP~1 begins with 10\% of the stake and we consider the
time required for it to control at least 33\% of the stake. This
corresponds to the parameter values $\rho=9$ and $\epsilon=2/3$.
In this case, decreasing $\mu_1-\mu_2$ from $0.5$
to $0.1$ can increase the number of blocks needed to
$\epsilon$-centralize by roughly five orders of magnitude.
Echoing the economic analysis in Section~\ref{s:econ}, the
probabilistic analysis in this section underscores the importance of
mechanisms for eliminating large differences between the reward
multipliers of different BPs.

\section{Proposer-Builder Separation and BP Heterogeneity}\label{s:pbs}

\subsection{Idealized Proposer-Builder Separation}

Sections~\ref{s:econ} and~\ref{s:prob} take very different paths to
arrive at a similar conclusion: significant BP heterogeneity leads to
concentration of stake (at equilibrium with endogenous staking, or in
the long run with automatic reinvestment) and, conversely,
decentralization can largely survive a small amount of BP
heterogeneity.  But, in practice, why wouldn't there be a large amount
of BP heterogeneity? As discussed in Section~\ref{s:intro}, such
heterogeneity can easily arise from many sources, such as exclusive
access to certain transactions, proprietary block-building algorithms,
or differences in computing resources.  In the context of
Theorems~\ref{t:BP_comp} and~\ref{t:bounds}, one might expect that the
parameters $1/\gamma$ and $\mu_1-\mu_2$ would, in reality, be quite
large.

As discussed in Section~\ref{s:intro}, proposer-builder separation
(PBS) is one approach to reducing the degree of BP heterogeneity.
An extremely idealized version of PBS, chosen to indicate the
best-case scenario, works as follows:
\begin{itemize}

\item Every participant chooses exactly one of two roles: a builder or
  a proposer.

\item Only proposers participate directly in the blockchain protocol
  as validators (committing stake, proposing blocks, casting votes,
  etc.).

\item For each block production opportunity, each builder sends a
  block (along with a bid) to the proposer that has been chosen for
  that opportunity.  (A builder's block should be materially independent of
  the identity of the proposer.)

\item The block proposer proposes the block with the highest bid.

\item The block proposal is accepted by the other validators and
  finalized.

\item The winning builder pays its bid to the block proposer.

\end{itemize}
The key observation is that, under the above assumptions, there is no
longer any heterogeneity across proposers---the reward for a block
production opportunity is the same (namely, the highest bid submitted by a
builder), no matter which proposer is chosen to take advantage of it.
This case translates to~$\gamma=1$ in Theorem~\ref{t:BP_comp} (with
all proposers committing an equal fraction of stake at equilibrium)
and $\mu_1 = \mu_n$ in Theorem~\ref{t:bounds} (with the staking
distribution not necessarily concentrating at all), and in these
senses is consistent with sustainable decentralization.\footnote{There might
still be heterogeneity across builders, potentially leading
to centralization within the builder set. Builder centralization also
poses risks, such as censorship and collusion, but
a discussion of these is outside the scope of this paper.}

There are, of course, many ways that reality could depart from this
idealized scenario. Here, we stress-test the implicit assumption in
PBS that there is no overlap between the set of builders and the set
of proposers. (For example, perhaps a successful block-builder chooses
to invest some of its profits into operating a number of validators.)
Specifically, we allow proposers to privately engage in
block-building on the side, in the hopes of constructing a block even
more profitable than the highest bid by one of the specialized
builders.  At the extreme, if proposers are much better block-builders
than the specialized builders, proposer-builder separation achieves
nothing and we should expect significant proposer heterogeneity and
consequent centralization.  The hope, then, would be that the builder
ecosystem is sufficiently competitive that proposers can rarely if
ever profit from privately constructing their own blocks.  We next
develop a simple model of a competitive builder ecosystem and
formalize the extent to which this hope is in fact correct.

\subsection{The Equalizing Effects of Competing
  Builders}\label{ss:competition}

We focus on a fixed block production opportunity and consider a finite
set~$Y$ of~$k$ specialized builders.  Each builder~$y \in Y$ is
capable of building a block that generates for it a nonnegative reward
of~$r_y$. Variations in~$r_y$ across builders~$y$ could stem from
different information, different block-building algorithms, different
choices of random seeds, and so on.  We assume that the $r_y$'s are
independent draws from a distribution~$D_Y$---in this sense, the
builders in~$Y$ are all equally proficient on average (for example,
because all the inferior builders have already been competed away).
The competitiveness of the builder ecosystem is then most simply
measured by its size, $k$.

There is a separate finite set~$I$ of proposers.\footnote{For example,
  in the Ethereum ecosystem, the parameter~$k$ is generally thought of
  as small (e.g., 5) while the size of~$I$ would be in the thousands.}
One of these,
say~$i$, is chosen for the block production opportunity under
consideration.  The proposer~$i$ accepts blocks from builders (along
with bids), and optionally also builds its own block.  We assume that
the proposer is capable of building a block that generates a reward
of~$r_i$ for it, where~$r_i$ is drawn from a distribution~$D_i$.
We assume that proposers follow what is then the obvious
reward-maximizing strategy:
\begin{itemize}

\item [(S1)] Accept blocks from builders; let~$B^*$ denote the submitted
  block with the highest bid (from the builder to the proposer), and
  denote this bid by~$b^*$.

    \item [(S2)] Privately construct a block~$B$ that would generate a reward
      of $r_i \sim D_i$ for~$i$.

      \item [(S3)] Propose either the block~$B^*$ (if $b^* > r_i$) or~$B$
        (otherwise), thereby earning a reward of $\max\{b^*,r_i\}$.

\end{itemize}
Heterogeneity across proposers is captured in this model by the
proposer-specific distribution~$D_i$. Perhaps some proposers do not
have the resources or inclination for private block-building, and
always accept the best block submitted by a builder (so in effect,
$D_i$ is a point mass at~0) while others privately compete with the
specialized builders. Or perhaps some proposers have access to a
larger pool of pending transactions than others.  How much variation
in expected reward is there across proposers, and how does the answer
depend on the competitiveness~$k$ of the builder
ecosystem?\footnote{This setup assumes that block-building is costless
  and that a proposer engages in private block-building only when it
  is chosen for a block production opportunity. A more general version
  of the model would incorporate the costs of block-building (e,g., from
  maintaining positions on a centralized exchange to take advantage of
  arbitrage opportunities) and would allow a proposer to compete with
  the specialized builders for every block production
  opportunity.}

Some version of the following three assumptions is necessary to prove
bounds on the degree of heterogeneity of proposer rewards in this
model:
\begin{itemize}

\item [(A1)] By publishing a builder-proposed block, the proposer
  should earn 
  no reward beyond the bid of the winning builder. (Intuitively, the
  builder should have already extracted all of the value of the block
  it proposed.)  If this assumption does not hold, different proposers
  may earn rewards at much different rates even when they all follow
  the strategy~(S1)--(S3).

\item [(A2)] Proposers cannot be significantly better block-builders
  than the builders in~$Y$. (If they are, the builder ecosystem
  doesn't matter.) Formally, we will assume that every proposer
  distribution~$D_i$ is first-order stochastically dominated (FOSD) by the
  builder distribution~$D_Y$. (I.e., for all $x \ge 0$, $\prob[r \sim
  D_i]{r \ge x} \le \prob[r \sim D_Y]{r \ge x}$.)

\item [(A3)] Proposer distributions~$D_i$ cannot be excessively
  heavy-tailed. (Otherwise, a proposer with such a distribution could
  earn vastly more rewards in expectation than a proposer that does no
  private block-building, even though it is almost never able to
  outperform the specialized block-builders.  For example, this is
  true if~$D_i$ is the ``equal-revenue distribution,''
  with distribution function $1-\tfrac{1}{x}$ on $[1,\infty)$.) Formally, we will
  assume that the builder distribution~$D_Y$ (which FOSD every
  proposer distribution~$D_i$) satisfies the monotone hazard rate
  (MHR) condition, meaning that $f(x)/(1-F(x))$ is nondecreasing,
  where~$f$ and~$F$ denote the PDF and CDF of the
  distribution. Intuitively, the tails of an MHR distribution are no
  heavier than those of an exponential distribution.
  
\end{itemize}
The main result of this section shows that, under
assumptions~(A1)--(A3), a competitive builder ecosystem ensures
minimal proposer heterogeneity.
\begin{theorem}[Competition Reduces Proposer Heterogeneity]\label{t:pbs}
Let~$Y$ denote a set of~$k$ specialized builders, with rewards~$r_y$ drawn
i.i.d.\ from a distribution~$D_Y$ that satisfies the MHR
condition, and assume that builders bid according to the (unique)
Bayes-Nash equilibrium of a symmetric first-price auction with value distribution~$D_Y$.
Let~$I$ denote a set of proposers, with proposer~$i$'s
private block-building reward~$r_i$ drawn from a distribution~$D_i$
that is FOSD by~$D_Y$. Assume that every proposer follows the strategy
in~(S1)--(S3).  Then, for every pair~$i,j \in I$ of proposers, the
expected reward earned by~$i$ (conditioned on selection) is at most
\[
1 + O\left(\frac{1}{\log k}\right)
\]
times that earned by~$j$ (conditioned on selection).
\end{theorem}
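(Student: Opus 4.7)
The plan is to sandwich every proposer's conditional expected reward $E[\max\{b^*, r_i\}]$ between two quantities that depend only on $D_Y$ and $k$, and then use the MHR condition to show these quantities differ by a factor of at most $1 + O(1/\log k)$. Throughout, let $r_{(j)}^{(m)}$ denote the $j$-th largest of $m$ iid draws from $D_Y$.

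For the lower bound, $\max\{b^*, r_i\} \ge b^*$ gives $E[\max\{b^*, r_i\}] \ge E[b^*]$, and revenue equivalence with the second-price auction yields $E[b^*] = E[r_{(2)}^{(k)}]$ at the symmetric BNE of the first-price auction. For the upper bound, the BNE satisfies no overbidding, so $b^* \le r_{(1)}^{(k)}$ almost surely; and because $D_i$ is FOSD by $D_Y$, I would couple $r_i$ with an independent draw $\tilde r_i \sim D_Y$ satisfying $r_i \le \tilde r_i$ pointwise. Then
\[
\max\{b^*, r_i\} \;\le\; \max\{r_{(1)}^{(k)}, \tilde r_i\} \;\stackrel{d}{=}\; r_{(1)}^{(k+1)},
\]
so $E[\max\{b^*, r_i\}] \le E[r_{(1)}^{(k+1)}]$. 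Both bounds are uniform in $i$, so the ratio of expected rewards between any two proposers is at most $E[r_{(1)}^{(k+1)}]/E[r_{(2)}^{(k)}]$.

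The final step, and the principal obstacle, is to bound this order-statistic ratio by $1 + O(1/\log k)$ under MHR. I would telescope
\[
E[r_{(1)}^{(k+1)}] - E[r_{(2)}^{(k)}] \;=\; \bigl(E[r_{(1)}^{(k+1)}] - E[r_{(1)}^{(k)}]\bigr) + \bigl(E[r_{(1)}^{(k)}] - E[r_{(2)}^{(k)}]\bigr),
\]
and represent each increment using $E[r_{(1)}^{(m)}] = \int_0^\infty (1 - F(x)^m)\, dx$, which after the substitution $u = F(x)$ rewrites both gaps as integrals of $1/h(F^{-1}(u))$ against Beta-like densities concentrated near $u = 1$, where $h$ denotes the hazard rate. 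Because MHR forces $h(F^{-1}(u))$ to be nondecreasing, both increments are at most a constant times the reciprocal of the hazard rate near the top quantile. The denominator $E[r_{(2)}^{(k)}]$, analyzed by the same identity, is at least of order $\log k$ times that same reciprocal hazard rate in the worst MHR case (the exponential distribution, where the gap is exactly $1$ and the second-order statistic mean is $H_k - 1 = \Theta(\log k)$). Dividing yields the $1 + O(1/\log k)$ bound. The MHR condition is essential here: without it, the residual mean above high quantiles can blow up, and the equal-revenue distribution flagged in assumption (A3) already makes the analogous ratio unbounded in $k$.
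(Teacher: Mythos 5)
Your proposal is correct, and its first half coincides exactly with the paper's argument: both reduce the problem to bounding $E[r_{(1)}^{(k+1)}]/E[r_{(2)}^{(k)}]$ (in your notation), using revenue equivalence for the uniform lower bound $E[b^*]=E[r_{(2)}^{(k)}]$ and no-overbidding plus an FOSD coupling for the uniform upper bound $E[r_{(1)}^{(k+1)}]$. Where you diverge is the treatment of this final order-statistic ratio. The paper factors it multiplicatively as $\tfrac{E[r_{(1)}^{(k+1)}]}{E[r_{(2)}^{(k+1)}]}\cdot\tfrac{E[r_{(2)}^{(k+1)}]}{E[r_{(2)}^{(k)}]}$, handling the second factor by a cited result that $n\mapsto E[r_{(2)}^{(n)}]$ is concave for MHR distributions (giving a factor $1+\tfrac{1}{k-1}$) and the first by arguing that the exponential is extremal among MHR distributions (giving the harmonic-number ratio). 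Your additive telescoping is self-contained and avoids the external concavity theorem, but as written the key sentence --- ``both increments are at most a constant times the reciprocal of the hazard rate near the top quantile,'' against which the denominator is ``at least of order $\log k$'' --- is not yet a precise pair of inequalities about a single well-defined quantity. The clean way to close it is: set $g(u):=1/h(F^{-1}(u))$, which is nonincreasing under MHR, and check via $u=F(x)$ that
\[
E[r_{(1)}^{(k+1)}]-E[r_{(2)}^{(k)}]=\int_0^1\bigl(u^k+ku^{k-1}\bigr)g(u)\,du,
\qquad
E[r_{(2)}^{(k)}]=\int_0^1\Bigl(\textstyle\sum_{j=0}^{k-1}(u^j-u^{k-1})\Bigr)g(u)\,du.
\]
For nonnegative weights $w_N,w_D$ and nonincreasing $g\ge 0$, the ratio of such integrals is at most $\sup_{t\in(0,1]}\int_0^t w_N/\int_0^t w_D$ (write $g$ as a superposition of indicators of intervals $[0,t)$); here that supremum is $O(1/\log k)$ uniformly in $t$, equaling $\bigl(1+\tfrac{1}{k+1}\bigr)/\bigl(\sum_{j=1}^{k}\tfrac1j-1\bigr)$ at $t=1$ and decaying for smaller $t$. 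This partial-integral comparison is the rigorous form of your ``exponential is the worst MHR case'' intuition and replaces both of the paper's two final steps in one stroke; with it filled in, your argument is a complete and arguably more elementary proof.
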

Thus, as the competitiveness~$k$ of the builder ecosystem increases,
the ratio in expected proposer rewards (roughly corresponding to the
parameter~$\gamma$ in Theorem~\ref{t:BP_comp} or~$\mu_1$ in
Theorem~\ref{t:bounds}) tends to~1.

\vspace{.1in}

\begin{prevproof}{Theorem}{t:pbs}
The minimum expected reward (conditioned on selection) would be earned
by a proposer that never engages in private block-building
(equivalently, $D_i$ is a point mass at~0). The expected reward of
such a proposer would be the expected highest bid by a builder. It is
known that, at the unique
Bayes-Nash equilibrium\footnote{In a Bayes-Nash equilibrium, each
  player~$i$ picks a strategy (i.e., a mapping of each potential
  reward~$r_i$ to a bid~$b_i$) that, assuming other players bid
  according to their equilibrium strategies, always maximizes its expected
  revenue (i.e., its bid times its probability of winning).} of a
first-price auction with i.i.d.\ private valuations (i.e., $r_i$'s)
and~$k$ bidders, the expected highest bid is the expected
second-largest sample of~$k$ i.i.d.\ samples from the value
distribution (i.e., from~$D_Y$); see e.g.~\cite[Proposition
2.3]{krishna}.\footnote{This quantity is easily seen to be the
  expected revenue of a second-price (Vickrey) auction in this
  scenario; the stated fact then follows from revenue equivalence
  (using that first- and second-price auctions have the same
  allocation rule at equilibrium, and that the revenue of a first-price
  auction equals the highest bid).}
Call this quantity~$R$.  How much bigger could the expected reward
(conditioned on selection) of a different proposer be?

Fix a proposer~$i$ with reward distribution~$D_i$ that is FOSD
by~$D_Y$. In the notation of the strategy~(S1)-(S3), the expected
reward of this proposer (conditioned on selection) is the expected
value of $\max\{b^*,r_i\}$. Because builders will, at equilibrium, bid
at most their values in a first-price auction, this expected reward
can be bounded above by the expectation of
$\max\{ \max_{y \in Y} r_y, r_i \}$---the expected maximum out of~$k$
i.i.d.\ samples from~$D_Y$ and one independent sample from~$D_i$.
Because~$D_Y$ FOSD $D_i$, this quantity can, in turn, be bounded above
by the expected maximum of~$k+1$ i.i.d.\ samples from~$D_Y$.

Next, we use the fact that, because~$D_Y$ satisfies the MHR condition,
the expectation of the second-largest value out of $n \ge 2$ i.i.d.\
samples from~$D_Y$ is a concave function of~$n$
(see~\cite[Theorem~1(b)]{watt}).  Because this quantity is also
nonnegative and nondecreasing in~$n$, this fact implies that the
expected second-largest value of~$k+1$ i.i.d.\ samples from~$D_Y$ is
at most $1+\tfrac{1}{k-1}$ times the expected second-largest value
of~$k$ i.i.d.\ samples from~$D_Y$ (i.e., at most
$(1+\tfrac{1}{k-1})R$).

It remains to bound the factor by which the expected value of the
largest of~$k+1$ i.i.d.\ samples from~$D_Y$ exceeds that of
the second-largest.  Using the representation
\[
F(x) = 1 - \exp\left\{ - \int_{0}^x h(x)dx\right\}
\]
of a distribution function~$F$ with density~$f$, where~$h(x) =
f(x)/(1-F(x))$ denotes its hazard rate, it follows that this factor is
maximized among MHR distributions (those with~$h$ nondecreasing) by
exponential distributions (those with~$h$ constant). A calculation
then shows that this factor is at most
\[
\frac{\sum_{j=1}^{k+1} \tfrac{1}{j}}{\sum_{j=2}^{k+1} \tfrac{1}{j}} =
1 + O\left( \frac{1}{\log k} \right).
\]

This completes the proof: the expected reward (conditioned on
selection) of the proposer~$i$ is at most the expected value of
$\max\{\max_{y \in Y} r_y, v_i\}$, which is at most the expected value
of the largest of~$k+1$ i.i.d.\ samples from~$D_Y$, which is at
most~$1+O(1/\log k)$ times the expected value of the second-largest
of~$k+1$ i.i.d.\ samples from~$D_Y$, which is at
most~$(1+1/(k-1))(1+O(1/\log k)) = 1+O(1/\log k)$ times~$R$ (i.e.,
the expected second-largest of~$k$ i.i.d.\ samples from~$D_Y$).
\end{prevproof}

\subsection*{Acknowledgments}

We thank Justin Drake, Ben Fisch, Mike Neuder, Matt Weinberg, and the
FC reviewers for useful comments on a preliminary version of this
paper.

\bibliographystyle{splncs04}
\bibliography{bibliography}

\begin{thebibliography}{10}
\providecommand{\url}[1]{\texttt{#1}}
\providecommand{\urlprefix}{URL }
\providecommand{\doi}[1]{https://doi.org/#1}

\bibitem{alsabah2020pitfalls}
Alsabah, H., Capponi, A.: Pitfalls of bitcoin’s proof-of-work: R\&d arms race and mining centralization. Available at SSRN 3273982  (2020)

\bibitem{ArnostiW22}
Arnosti, N., Weinberg, S.M.: Bitcoin: {A} natural oligopoly. Manag. Sci.  \textbf{68}(7),  4755--4771 (2022)

\bibitem{athreya1968embedding}
Athreya, K.B., Karlin, S.: Embedding of urn schemes into continuous time markov branching processes and related limit theorems. The Annals of Mathematical Statistics  \textbf{39}(6),  1801--1817 (1968)

\bibitem{eobp1}
Bahrani, M., Garimidi, P., Roughgarden, T.: Transaction fee mechanism design with active block producers. CoRR  \textbf{abs/2307.01686} (2023)

\bibitem{tullock}
Buchanan, J.M., Tollison, R.D., Tullock, G.: Toward a theory of the rent-seeking society. Texas A\&M University economics series, Texas A \& M University (1980)

\bibitem{Buddish18}
Budish, E.: The economic limits of bitcoin and the blockchain. Working Paper 24717, National Bureau of Economic Research (June 2018)

\bibitem{endgame}
Buterin, V.: Endgame (2021), \url{https://vitalik.ca/general/2021/12/06/endgame.html}

\bibitem{pbs_spec}
Buterin, V.: Proposer/block builder separation-friendly fee market designs (2021), \url{https://ethresear.ch/t/proposer-block-builder-separation-friendly-fee-market-designs/9725}

\bibitem{Dimitri17}
Dimitri, N.: Bitcoin mining as a contest. Ledger  \textbf{2},  31--37 (2017)

\bibitem{polya}
Eggenberger, F., Pólya, G.: Über die statistik verketteter vorgänge. ZAMM - Journal of Applied Mathematics and Mechanics / Zeitschrift für Angewandte Mathematik und Mechanik  \textbf{3}(4),  279--289 (1923). \doi{https://doi.org/10.1002/zamm.19230030407}, \url{https://onlinelibrary.wiley.com/doi/abs/10.1002/zamm.19230030407}

\bibitem{FantiKORVW19}
Fanti, G., Kogan, L., Oh, S., Ruan, K., Viswanath, P., Wang, G.: Compounding of wealth in proof-of-stake cryptocurrencies. In: Goldberg, I., Moore, T. (eds.) Financial Cryptography and Data Security - 23rd International Conference, {FC} 2019, Frigate Bay, St. Kitts and Nevis, February 18-22, 2019, Revised Selected Papers. Lecture Notes in Computer Science, vol. 11598, pp. 42--61. Springer (2019)

\bibitem{mevboost}
Flashbots: Mev-boost in a nutshell (2023), \url{https://boost.flashbots.net/}

\bibitem{gradstein1995}
Gradstein, M.: Intensity of competition, entry and entry deterrence in rent seeking contests. Economics \& Politics  \textbf{7}(1),  79--91 (1995)

\bibitem{Pai_Resnick}
Gupta, T., Pai, M.M., Resnick, M.: The centralizing effects of private order flow on proposer-builder separation. In: Bonneau, J., Weinberg, S.M. (eds.) 5th Conference on Advances in Financial Technologies, {AFT} 2023, October 23-25, 2023, Princeton, NJ, {USA}. LIPIcs, vol.~282, pp. 20:1--20:15. Schloss Dagstuhl - Leibniz-Zentrum f{\"{u}}r Informatik (2023)

\bibitem{HR}
Hillman, A.L., Riley, J.G.: Politically contestable rents and transfers. Economics \& Politics  \textbf{1}(1),  17--39 (1989)

\bibitem{rich_get_richer}
Huang, Y., Tang, J., Cong, Q., Lim, A., Xu, J.: Do the rich get richer? fairness analysis for blockchain incentives. In: Li, G., Li, Z., Idreos, S., Srivastava, D. (eds.) {SIGMOD} '21: International Conference on Management of Data, Virtual Event, China, June 20-25, 2021. pp. 790--803. {ACM} (2021)

\bibitem{janson}
Janson, S.: Limit theorems for triangular urn schemes. Probability Theory and Related Fields  \textbf{134}(3),  417--452 (2006)

\bibitem{JT04}
Johari, R., Tsitsiklis, J.N.: Efficiency loss in a network resource allocation game. Math. Oper. Res.  \textbf{29}(3),  407--435 (2004)

\bibitem{krishna}
Krishna, V.: Auction theory. Academic press (2009)

\bibitem{polya_textbook}
Mahmoud, H.: P{\'o}lya urn models. CRC press (2008)

\bibitem{MS}
Monderer, D., Shapley, L.S.: Potential games. Games and economic behavior  \textbf{14}(1),  124--143 (1996)

\bibitem{Rosu21}
Rosu, I., Saleh, F.: Evolution of shares in a proof-of-stake cryptocurrency. Manag. Sci.  \textbf{67}(2),  661--672 (2021)

\bibitem{icm}
Roughgarden, T.: Potential functions and the inefficiency of equilibria. In: Proceedings of the International Congress of Mathematicians (ICM). vol.~3, pp. 1071--1094 (2006)

\bibitem{tang2022stability}
Tang, W.: Stability of shares in the proof of stake protocol--concentration and phase transitions. arXiv preprint arXiv:2206.02227  (2022)

\bibitem{tang2023trading}
Tang, W.: Trading and wealth evolution in the proof of stake protocol. arXiv preprint arXiv:2308.01803  (2023)

\bibitem{BPP}
Thiery, T.: Empirical analysis of builders' behavioral profiles (bpps) (2023), \url{https://ethresear.ch/t/empirical-analysis-of-builders-behavioral-profiles-bbps/16327}

\bibitem{watt}
Watt, M.: Concavity and convexity of order statistics in sample size. arXiv preprint arXiv:2111.04702  (2021)

\end{thebibliography}

\appendix

\section{Proof of Lemma \ref{l:mono1}}

\begin{proof}
The optimality conditions for the concave optimization problem in
Proposition~\ref{prop:jt} state that optimal solutions are those that
are feasible and that equalize the partial derivatives of the
objective function (except for unused variables, for which
the partial derivative may be lower). That is, a feasible allocation
$\bfx$ is optimal for~\eqref{eq:opt1}--\eqref{eq:opt3}
if and only if there is a constant~$\lambda$
such that
\begin{equation}\label{eq:kkt1}
\frac{\partial \Phi_i}{\partial x_i}(\bfx) = \mu_i \cdot \frac{r}{c}
\cdot (1 - x_i) \le \lambda
\end{equation}
for all $i \in I$ and
\begin{equation}\label{eq:kkt2}
\frac{\partial \Phi_i}{\partial x_i}(\bfx) = \mu_i \cdot \frac{r}{c}
\cdot (1 - x_i) = \lambda
\end{equation}
for all~$i$ with~$x_i > 0$, where~$\Phi_i$ denotes the $i$th term of
the objective function in~\eqref{eq:opt1}.  Starting from the
allocation~$\hat{\bfx}(\mu,r,c)$ (with constant~$\lambda$),
increasing~$\mu_j$ increases $\partial \Phi_j/\partial x_j$.  If its
new value remains at most~$\lambda$, then the equilibrium allocation
remains the same.  If its new value exceeds $\lambda$, to compensate
and restore the optimality conditions, $x_j$ must increase and every
other~$x_i$ (that is not already~0) must decrease.
\end{proof}

\section{Proof of Theorem \ref{t:bounds}}

\subsection{Proof of Upper Bound}
We start with a monotonicty lemma that shows
increasing how competitive the block producers are can only increase
the number of blocks it takes for centralization to occur. This is
intuitive since if BP 1's competitors start earning more, the rate at
which BP 1 accrues their advantage becomes slower. We will use this to
show that if $\mu'$ is more competitive than $\mu,$ then an upper
bound on the number of blocks till $\epsilon$-centralization for
$\mu'$ also holds for $\mu$.

\begin{lemma}[Second Monotonicity Lemma] \label{l:monotone}
Let $\mu$ and $\mu'$ be two block producer sets with initial starting
stakes $\pi$ where $\mu_i'\ge \max_{i\in I, i\neq 1}\{\mu_i\}$ for all $i \ge 2$. Then $\Pr(x_{1,t} > a) \ge \Pr(x'_{1,t} > a)$ for all $a\in[0,1]$ where $x_{1,t}$ and $x'_{1,t}$ correspond to BP 1's chance of being chosen at block $t$ under $(\mu,\pi)$ and $(\mu',\pi)$ respectively. 
\end{lemma}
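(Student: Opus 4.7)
My plan is to reduce to a single-coordinate comparison and then couple via the Athreya-Karlin embedding of the discrete staking process into continuous time.

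First, observe that $\mu'_i \ge \max_{k \neq 1} \mu_k$ for $i \ge 2$ implies in particular $\mu'_i \ge \mu_i$ componentwise. I would reduce to the single-coordinate case by interpolating: write $\mu = \mu^{(0)}, \mu^{(1)}, \ldots, \mu^{(n-1)} = \mu'$, where $\mu^{(k)}$ differs from $\mu^{(k-1)}$ only in the $(k+1)$-th coordinate (which is raised from $\mu_{k+1}$ to $\mu'_{k+1}$). By transitivity of stochastic dominance, it suffices to prove the claim when $\mu$ and $\mu'$ agree in every coordinate except at a single index $j \ge 2$, with $\mu'_j > \mu_j$ and $\mu'_1 = \mu_1$.

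For this single-coordinate case, I would use the Athreya-Karlin continuous-time embedding. For each BP $i$, draw i.i.d.\ $\mathrm{Exp}(1)$ random variables $\xi_{i,1}, \xi_{i,2}, \ldots$, and define BP $i$'s $k$-th jump time recursively by $T_{i,k} - T_{i,k-1} = \xi_{i,k}/(\pi_{i,1} + (k-1)\mu_i r)$, so that BP $i$'s instantaneous jump rate equals its current stake and each jump increases its stake by $\mu_i r$. Merging all jump times across BPs in increasing order reproduces the discrete staking process in distribution. Coupling the $\mu$- and $\mu'$-processes by reusing the same $\xi_{i,k}$, one sees that BP $i$'s jump times coincide in the two processes for every $i \neq j$ (including $i = 1$), while BP $j$'s jumps occur weakly earlier in the $\mu'$-process because the denominators $\pi_{j,1} + (k-1)\mu'_j r$ are larger. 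Consequently, at every continuous time $t$ under the coupling, $\pi_1(t) = \pi'_1(t)$, $\pi_i(t) = \pi'_i(t)$ for $i \neq j$, and $\pi_j(t) \le \pi'_j(t)$; in particular the instantaneous share satisfies $x_1(t) \ge x'_1(t)$ pointwise.

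The main obstacle is converting this continuous-time pointwise dominance into discrete-time stochastic dominance of the marginal $x_{1,t}$. The discrete shares correspond to continuous-time shares evaluated at the respective $t$-th jump times $\tau_t$ and $\tau'_t$, with $\tau'_t \le \tau_t$ (the $\mu'$-process accumulates events faster), yet neither continuous-time share is monotone in $t$, so one cannot directly compare $x_1(\tau_t)$ with $x'_1(\tau'_t)$. I would resolve this by analyzing the coupling at the level of event sequences: the $\mu'$-process's event sequence is obtained from the $\mu$-process's by shifting BP $j$'s successive jumps to earlier positions relative to the other BPs' jumps. A hybrid argument, performing one adjacent swap of a BP $j$-event with an immediately preceding non-BP-$j$ event at a time and verifying that each such swap weakly decreases the marginal distribution of $x_{1,t}$, should then close the loop. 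Making this interleaving analysis rigorous is the delicate technical heart of the proof.
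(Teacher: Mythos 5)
There is a genuine gap. Your continuous-time coupling correctly yields dominance at equal \emph{continuous} times: for a single-coordinate increase at $j \ge 2$, the trajectories of all BPs other than $j$ are literally unchanged, $\pi_j(t) \le \pi'_j(t)$, hence $x_1(t) \ge x'_1(t)$ for every real $t$. But the lemma compares the two processes at equal \emph{event counts}, i.e.\ $x_{1,t} = x_1(\tau_t)$ versus $x'_{1,t} = x'_1(\tau'_t)$ with $\tau'_t \le \tau_t$, and the chain $x_1(\tau_t) \ge x_1(\tau'_t) \ge x'_1(\tau'_t)$ fails because $x_1(\cdot)$ is not monotone (between $\tau'_t$ and $\tau_t$ the unprimed process may record several non-BP-1 events that drag its share down). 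Indeed, under your coupling the desired inequality need not hold pathwise at equal event counts, so only a distributional statement can be salvaged. The ``hybrid argument'' of adjacent swaps that is supposed to supply it is exactly the hard part, and it is neither stated precisely (the primed prefix of length $t$ does not merely permute the unprimed one---it can contain strictly more $j$-events and strictly fewer events of every other label) nor verified for a single swap. As written, the proof does not close.

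Two further remarks. First, your interpolation step reduces to componentwise domination $\mu'_i \ge \mu_i$, which is a \emph{weaker} hypothesis than the lemma's $\mu'_i \ge \max_{k \ne 1}\mu_k$; you would be proving a strictly stronger statement, and you should at least flag that you are doing so, since the truth of the stronger statement is not established by anything in your sketch. Second, the paper's own proof avoids continuous time entirely: it couples the two \emph{discrete} processes by drawing one uniform $y_t$ per block and selecting the winner in both processes from the same $y_t$ via the cumulative stake intervals, then proves $x_{1,t} \ge x'_{1,t}$ pathwise by strong induction with a three-case analysis of which process selects BP~1. The stronger hypothesis is used precisely in the case where the two processes select different non-BP-1 winners $i$ and $j$, to guarantee $\mu_i \le \mu'_j$ so the primed total grows at least as fast. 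That route is elementary and complete; if you want to keep the Athreya--Karlin embedding, you must supply the missing event-count comparison, which is where all the difficulty lives.
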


\begin{proof}
    Here we give a formal proof of the coupling argument sketched in the main body. Let $\Pi=(\pi_t)_{t\ge 1}$ and $\Pi'=(\pi'_t)_{t\ge 1}$ be the staking processes under $\mu$ and $\mu'$ respectively, both with starting stakes $\pi$. Then consider the following coupling for $(\Pi,\Pi')$. 

    For all blocks $t$: Define $a_{0,t}=a'_{0,t}=0$. Then for all $i\in I$, recursively define $$a_{i,t}=x_{i,t}+a_{i-1,t} \text{ and similarly }a'_{i,t}=x'_{i,t}+a'_{i-1,t}$$

    Then consider the following process to choose the block producer for every block $t$:\\ (1) Sample $y_t\sim U[0,1]$, (2) Let $i^*_t=i$ if $y_t\in (a_{i-1,t},a_{i,t}]$, (3) Let $i'^*_t = i$ if $y_t\in (a_{i-1,t},a_{i,t}]$

    Crucially we are using the same randomness to sample the winner under both $\Pi$ and $\Pi'$ in every block.  Combining this with the fact $a_{i,t}-a_{i-1,t}=x_{i,t}$, $a'_{i,t}-a'_{i-1,t}=x'_{i,t}$ and $y_t$ is sampled uniformly random from $[0,1]$ gives us that the marginal distributions of $\Pi$ and $\Pi'$ under this coupling match their distributions under the staking process. Thus this process is a valid coupling, and to prove the lemma, it suffices to show that  for every sequence of random draws $\{y_t\}_{t\ge 1}$ that $x_{1,t} \ge x'_{1,t}$. We proceed to show this via strong induction. \\ 

    The base case follows trivially since both processes start with the same stakes $\pi$ so $x_{1,1}=x'_{1,1}$. Now for the inductive hypothesis assume that for all sequences $\{y_t\}_{1\le t\le k-1}$, $x_{1,s} \ge x'_{1,s}$ for all $s=1,...,k$ . We then show that for any $y_k\in[0,1]$ that $x_{1,k+1} \ge x'_{1,k+1}$. 
    
    We start by claiming that $\pi_{1,k} \ge \pi'_{1,k}$. This follows from the strong inductive hypothesis since $x_{1,s} \ge x'_{1,s}$ for all $s\le k$ implies that whenever $y_t  \le x'_{1,t}$ that $y_t\le x_{1,t}$ for all $t\le k$. Since $\mu_1=\mu'_1$ this gives us $\pi_{1,k}\ge \pi'_{1,k}$. Now fix any $y_{k}\in [0,1]$. Then from our inductive hypothesis, we  have 3 cases: (1) $y_k \le x'_{1,k} \le x_{1,k}$, (2) $x'_{1,k} < y_k < x_{1,k}$, or (3) $x'_{1,k} < x_{1,k}< y_k$. In particular note the case where $i^*_k\neq 1$ but $i'^*_k=1$ can never happen. We will show that each of these three cases will lead to $x_{1,k+1}\ge x'_{1,k+1}$. We use $\Pi_t = \sum_{i\in I} \pi_{i,t}$ and $\Pi'_t = \sum_{i\in I} \pi'_{i,t}$ to refer to the total stakes at time $t$ under $\Pi$ and $\Pi'$ respectively. 

    \begin{enumerate}
        \item $i^*_k=1, i'^*_k=1$: Here  $\frac{\pi_{1,k}}{\Pi_k} \ge \frac{\pi'_{1,k}}{\Pi'_k} \implies \frac{\pi_{1,k}+\mu_1r}{\Pi_k+\mu_1r} \ge \frac{\pi'_{1,k}+\mu_1r}{\Pi'_k+\mu_1r}$
        \item $i^*_k=1, i'^*_k=i\neq1$: Here $\frac{\pi_{1,k}}{\Pi_k} \ge \frac{\pi'_{1,k}}{\Pi'_k} \implies \frac{\pi_{1,k}+\mu_1r}{\Pi_k+\mu_1r} \ge \frac{\pi'_{1,k}}{\Pi'_k+\mu_ir}$ 
        \item $i^*_k=i\neq 1, i'^*_k=j\neq 1$: Here we use $\pi_{1,k} \ge \pi'_{1,k}$, $\mu_i \le \mu'_j$, and $\frac{\pi_{1,k}}{\Pi_k} \ge \frac{\pi'_{1,k}}{\Pi'_k}$ to get $\frac{\pi_{1,k}}{\Pi_k+\mu_i} \ge\frac{\pi'_{1,k}}{\Pi'_k+\mu'_j} $
    \end{enumerate}

    Hence in all cases we get $x_{1,k+1}\ge x'_{1,k+1}$. 
    
\end{proof}

With the monotonicity of the effect of competitiveness on the blocks to centralization established, we are ready to prove the main theorem.

\begin{proof}
    We start by proving the upper bound for the staking process $(\mu,\pi)$. 

    \paragraph{Reducing to Two Block Producers:} We first show that it suffices to prove the upper bound for the case with two block producers where $\hat{\mu} = (\mu_1,\mu_2)$ and $\hat{\pi}=(\pi_1,\pi_{-1})$. To see this, consider $\mu' = (\mu_1,\mu_2,...,\mu_2)$ where $\mu'_i=\mu_2$ for $i\ge 2$ with the same starting stakes $\pi$. Since $\mu_2 =\max_{i\in I, i\neq 1}\{\mu_i\}$ , by lemma \ref{l:monotone} we have that the probability $(\mu,\pi)$ is $\epsilon$-centralized at block $t$ first order stochastically dominates the probability  $(\mu',\pi)$ is $\epsilon$-centralized at block $t$. Furthermore, note that for the purposes of finding an upper bound on the number of blocks it takes for $(\mu',\pi)$ to $\epsilon$-centralize, we are only concerned with how fast $\pi'_{1,t}$ grows compared to $\pi'_{-1,t}$. Then since $\mu'_i=\mu_2$ for all $i\ge 2$, $\pi'_{-1,t}$ increases by the same amount if $i^*_t\neq 1$ regardless of which block producer is chosen. Thus the probability $i^*_t\neq 1$ for future blocks is also the same regardless of which $i\in I\setminus \{1\}$ is chosen. This implies that the time to $\epsilon$-centralization is equivalent under $(\mu',\pi)$ and $(\hat{\mu},\hat{\pi})$. It follows that $\Pr(\hat{x}_{1,t} > a) = \Pr(x'_{1,t} > a) \le \Pr(x_{1,t} > a)$ for all blocks $t\ge 1$ and  $a\in [0,1]$, making it sufficient to prove the upper bound holds for $(\hat{\mu},\hat{\pi})$. Thus from here we will assume WLOG that there are only two block producers with $\mu = (\mu_1,\mu_2)$ and $\pi=(\pi_1,\pi_{-1})$
 
    \paragraph{Passing to the Continuous Case:} As has been noted previously, the process describing how stake evolves can be modeled as a P\'olya urn with a diagonal replacement matrix. Imagine an urn with all of the staked coins where each coin has a corresponding block producer. Then for every block production opportunity, a coin is randomly sampled from the urn. If the coin is owned by BP $i$, then $\mu_i r$ additional coins corresponding to BP $i$  are added to the urn. Analyzing the dynamics of \emph{unbalanced} urns, where different amounts of coins are added to the urn depending on who is chosen, tends to be difficult in the discrete scheme because the total amount of coins in the urn is dependent on the history of what coins were chosen.  
    
    To get around this, a classic technique in studying P\'olya urns is embedding the discrete process into a continuous one. In the continuous case, the discrete sampling process is replaced with a birth-death branching process. Every coin in the urn is given a clock with a i.i.d Exp(1) random expiry date. If a coin belonging to BP $i$ expires, that coin's clock resets and $\mu_i r$ new coins belonging to BP $i$ are added to the urn, each also with their own i.i.d. Exp(1) clocks. Since the Exp(1) distribution is memoryless, it follows that at any given time, the next clock to go off will be uniformly sampled from all of the coins in the urn. Thus the probability the next coin chosen is BP $i$'s matches the probability they would have been selected in the discrete process. Thus, if we consider snapshots of the continuous process at the times the clocks expire, then this process is distributed identically to the discrete process. This lets us prove results about the staking process by proving results in the continuous version of the staking process and then inverting back to the discrete case. 

The primary advantage of working in the continuous case is that coins from different BPs don't affect each other. Notice that a clock expiring for one BP's coins have no influence on how another BP's coins evolve. We can make this precise by observing that each BP's coins grow according to independent Yule processes. Since we have shown it suffices to consider the case with two block producers, we will have two processes $\Pi_1(t)$ and $\Pi_2(t)$ representing the evolution of stake for $\pi_{1,t}$ and $\pi_{2,t}$ respectively. The fact that $\Pi_1(t)$ and $\Pi_2(t)$ evolve independently, lets us bound the ratio of $\Pi_1(t)/\Pi_2(t)$ via concentration bounds on $\Pi_1(t)$ and $\Pi_2(t)$ independently. From \cite{polya_textbook} we have that E$[\Pi_1(t)]=\pi_1e^{\mu_1rt}$ and Var$[\Pi_1(t)]=\mu_1r\pi_1(e^{2\mu_1r}-\mu_1rte^{\mu_1rt})\le \mu_1 r\pi_{1}e^{2\mu_1rt}$. Thus we can apply Chebyshev's inequality to get 

     \begin{align*}
         \Pr\left(|\Pi_1(t) - E[\Pi_1(t)] |\ge \frac{1}{2}E[\Pi_1(t)]\right) \le \frac{4 Var[\Pi_1(t)]}{E[\Pi_1(t)]^2} \\ 
         \implies \Pr\left(\Pi_1(t) < \frac{1}{2}E[\Pi_1(t)]\right) \le \frac{4\mu_1r\pi_{1}e^{2\mu_1rt}}{\pi^2_{1}e^{2\mu_1rt}} = \frac{4\mu_1 r}{\pi_{1}} 
     \end{align*}

    The same derivation for BP 2 gives $$\Pr\left(\Pi_2(t) \ge \frac{3}{2} E[\Pi_2(t)]\right) \le \frac{4\mu_2 r}{\pi_{-1}}. $$

    Then using a union bound gives $$\Pr\left(\frac{\Pi_1(t)}{\Pi_2(t)} < \frac{\frac{1}{2}E[\Pi_1(t)]}{\frac{3}{2}E[\Pi_2(t)]}\right) <  4r\left(\frac{\mu_1}{\pi_{1}}+\frac{\mu_2}{\pi_{-1}}\right) $$

    Now let $\hat{t}=\frac{\ln \frac{3\pi_{-1}}{\pi_1}\frac{1-\epsilon}{\epsilon}}{r(\mu_1-\mu_2)}$. This gives us 

    \begin{equation*}
        E[\Pi_1(\hat{t})]= \pi_1\left(\frac{3\pi_{-1}(1-\epsilon)}{\pi_1 \epsilon}\right)^{\frac{\mu_1}{\mu_1-\mu_2}} \text{ and } E[\Pi_2(\hat{t})]= \pi_{-1}\left(\frac{3\pi_{-1}(1-\epsilon)}{\pi_1 \epsilon}\right)^{\frac{\mu_2}{\mu_1-\mu_2}}
    \end{equation*}

    Plugging this into our union bound above gives for any $t\ge \hat{t}$, $$\Pr\left(\frac{\Pi_1(\hat{t})}{\Pi_2(\hat{t})} < \frac{1-\epsilon}{\epsilon}\right) \le 4r\left(\frac{\mu_1}{\pi_{1}}+\frac{\mu_2}{\pi_{-1}}\right)$$

    where $\frac{\Pi_1(\hat{t})}{\Pi_2(\hat{t})} < \frac{1-\epsilon}{\epsilon} \Leftrightarrow \frac{\Pi_1(\hat{t})}{\Pi_1(\hat{t})+\Pi_2(\hat{t})} < 1-\epsilon$
        
    \paragraph{Moving Back to the Discrete Case:} Now that we have shown a $\hat{t}$ for which the continuous process $\epsilon$-centralizes with constant probability, it remains to find how many blocks in the discrete process this corresponds to. Recall every time a clock expires in the continuous process, a block passes in the discrete process. Thus to get an upper bound on how many blocks have passed by time $\hat{t}$, it suffices to upper bound how many clocks are expected to expire by time $\hat{t}$. Observe that the number of coins must increase by at least $\mu_2 r$ everytime a clock goes off. So if we have an upper bound of the total amount of coins there are at time $\hat{t}$, we know the maximum number of clocks that could have expired. Using this gives an upper bound of  $\frac{\Pi_1(\hat{t})+\Pi_2(\hat{t})}{\mu_2r}$ blocks.

    Since Chebyshev's inequality is two-sided, with our union bound from before, we have already accounted for the probability that both $\Pi_1(\hat{t})$ and $\Pi_2(\hat{t})$ are above $3/2$ their expectations. Hence we can use this bound without suffering any loss to our probability, giving us a bound of

    {
    \setlength{\jot}{-3pt}
    \begin{align*}
    \frac{\frac{3}{2}(E[\Pi_1(\hat{t})]+E[\Pi_2(\hat{t})])}{\mu_2r} =   \frac{3}{2\mu_2r}\left(\pi_1\left(\frac{3\pi_{-1}(1-\epsilon)}{\pi_1 \epsilon}\right)^{\frac{\mu_1}{\mu_1-\mu_2}} \right. \nonumber 
    + \left. \pi_{-1}\left(\frac{3\pi_{-1}(1-\epsilon)}{\pi_1 \epsilon}\right)^{\frac{\mu_2}{\mu_1-\mu_2}}\right)
    \end{align*}
    }

Putting it all together, due to the equivalence of the continuous
process and the discrete process this gives us for $t$ greater than
this bound that $$\Pr\left(\frac{\pi_{1,t}}{\pi_{1,t}+\pi_{-1,t}} <  1-\epsilon\right) <4r\left(\frac{\mu_1}{\pi_{1}}+\frac{\mu_2}{\pi_{-1}}\right) <8\beta$$.\\ 
\end{proof}

\vspace{-1cm}

\subsection{Proof of Lower Bound}
Here we go through the  proof of the lower bound in Theorem \ref{t:bounds} for a staking process $(\mu,\pi)$. We largely elide the exposition justifying steps that are directly mirrored in the upper bound.

We start with a corollary of lemma \ref{l:monotone} that gives us a monotonicity result in the other direction. If we decrease the competitiveness of the BPs apart from BP
1, then the time to centralization can only decrease. Similarly to the
previous lemma, we will use this to show that if $\mu'$ is less
competitive than $\mu$, then a lower bound for the number of blocks
till $\epsilon$-centralization for $\mu'$ also holds for $\mu$.

\begin{corollary}\label{c:lower}
    Let $\mu$ and $\mu'$ be two block producer sets with initial starting stakes $\pi$ where $\mu'_i\le \min_{i\in I}\{\mu_i\}$ for all $i \ge 2$.
    Then $\Pr(x_{1,t} < a) \ge \Pr(x'_{1,t} <  a)$ for all $a\in[0,1]$ where $x_{1,t}$ and $x'_{1,t}$ correspond to BP 1's chance of being chosen at block $t$ under $(\mu,\pi)$ and $(\mu',\pi)$ respectively.   
\end{corollary}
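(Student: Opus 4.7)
The plan is to obtain the corollary as a direct consequence of Lemma~\ref{l:monotone} by applying it with the roles of $\mu$ and $\mu'$ interchanged. The hypothesis of the corollary is that $\mu'_i \le \min_{j \in I}\{\mu_j\}$ for every $i \ge 2$; in particular, for every $i \ge 2$, $\mu_i \ge \min_{j \in I}\{\mu_j\} \ge \max_{k \ge 2}\{\mu'_k\}$. Hence the pair $(\mu', \mu)$, viewed in that order, satisfies the hypothesis ``$\mu_i'' \ge \max_{i \in I, i\neq 1}\{\mu_i'\}$ for all $i \ge 2$'' of Lemma~\ref{l:monotone} (where here I am temporarily using $\mu'$ in the role of the base multiplier vector and $\mu$ in the role of the more competitive one).

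Next, I would invoke not merely the marginal stochastic-dominance conclusion of Lemma~\ref{l:monotone} but the stronger pathwise statement that its coupling argument actually produces: the induction in the proof of Lemma~\ref{l:monotone} establishes that, under the shared-randomness coupling of the two staking processes, $x_{1,s} \ge x'_{1,s}$ holds almost surely for every block $s$, not just in distribution. Applying this coupling to the pair $(\mu',\mu)$ (that is, letting $\mu'$ play the role of the ``less competitive'' process and $\mu$ play the role of the ``more competitive'' process) therefore yields the pointwise inequality $x'_{1,s} \ge x_{1,s}$ almost surely.

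Finally, pointwise dominance gives the event inclusion $\{x'_{1,t} < a\} \subseteq \{x_{1,t} < a\}$ for every $a \in [0,1]$, and taking probabilities yields $\Pr(x_{1,t} < a) \ge \Pr(x'_{1,t} < a)$, which is exactly the conclusion of the corollary. The only substantive point, and the one I would take care with in writing this up, is the translation between the ``$>$'' statement of the lemma and the ``$<$'' statement of the corollary: a raw swap of the CDF comparison would only yield a weak inequality for complementary events, so I would emphasize that the corollary follows cleanly because the coupling produces the pathwise ordering, which implies both the $<$ and $>$ versions simultaneously. There is no new mathematical obstacle beyond this careful bookkeeping.
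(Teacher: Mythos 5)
Your proposal is correct and takes essentially the same route as the paper: both apply Lemma~\ref{l:monotone} with the roles of $\mu$ and $\mu'$ interchanged (after checking that $\mu_i \ge \min_j \mu_j \ge \max_{k\ge 2}\mu'_k$ verifies its hypothesis) and then pass to the complementary events. Your extra step of invoking the pathwise ordering from the coupling to handle the strict-versus-weak inequality in the complementation is, if anything, slightly more careful than the paper's one-line version.
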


\begin{proof}
    From how $\mu'$ is defined we have that $\mu_i\ge \max_{i\in I, i\neq 1}\{\mu'_i\}$. Thus by lemma \ref{l:monotone} we have that $\Pr(x'_{1,t} > a) \ge \Pr(x_{1,t} >  a)$ implying $\Pr(x_{1,t} < a) \ge \Pr(x'_{1,t} <  a)$. 
\end{proof}

With this monotonocity result we can continue with the proof of the lower bound.

\begin{proof}
    We start by reducing our analysis to the case of two block producers. Let $\mu'=(\mu_1,\mu_n,...,\mu_n)$. Then since $\mu_n = \min_{i\in I}\{\mu_i\}$ we can apply Corollary \ref{c:lower} to get that a lower bound for $(\mu',\pi)$ also holds for $(\mu,\pi)$. Then, like in the upper bound, we note that $\mu_i=\mu_j$ for all $i,j\neq1$ implies that $(\mu',\pi)$ $\epsilon$-centralizes at the same rate as the process $((\mu_1,\mu_n),(\pi_1,\pi_{-1})$. Thus WLOG we continue by letting $\mu=(\mu_1,\mu_n)$ and $\pi=\pi_1,\pi_{-1}$.

    Now we move to the continuous case, letting $\Pi_1(t)$ and $\Pi_2(t)$ represent the evolution of $\pi_{1,t}$ and $\pi_{-1,t}$ respectively. Note these are the same yule processes we had in the upper bound with the slight modification of changing $\mu_2$ to $\mu_n$ in $\Pi(2)$. Thus we can use Chebyshev again to get 
    $$\Pr\left(\Pi_1(t) > \frac{3}{2}E[\Pi_1(t)]\right) < \frac{4\mu_1r}{\pi_1} \text{ and } \Pr\left(\Pi_2(t) < \frac{1}{2}E[\Pi_2(t)]\right) < \frac{4\mu_n r}{\pi_{-1}}$$

    A union bound then gives $$\Pr\left(\frac{\Pi_1(t)}{\Pi_2(t)} > \frac{\frac{3}{2}E[\Pi_1(t)]}{\frac{1}{2}E[\Pi_2(t)]}\right) < 4r\left(\frac{\mu_1}{\pi}+\frac{\mu_n}{\pi_{-1}}\right)$$

    We can then let $\hat{t}=\frac{\ln \frac{\pi_{-1}}{3\pi_1} \frac{(1-\epsilon}{\epsilon})}{r(\mu_1-\mu_n)}$ giving us 
    \begin{equation*}
        E[\Pi_1(\hat{t})]= \pi_1\left(\frac{\pi_{-1}(1-\epsilon)}{3\pi_1 \epsilon}\right)^{\frac{\mu_1}{\mu_1-\mu_n}} \text{ and } E[\Pi_2(\hat{t})]= \pi_{-1}\left(\frac{\pi_{-1}(1-\epsilon)}{3\pi_1 \epsilon}\right)^{\frac{\mu_n}{\mu_1-\mu_n}}
    \end{equation*}

    Plugging $\hat{t}$ into the union bound then gives $$\Pr\left(\frac{\Pi_1(\hat{t})}{\Pi_2(\hat{t})} \ge \frac{1-\epsilon}{\epsilon}\right) \le 4r\left(\frac{\mu_1}{\pi_{1}}+\frac{\mu_n}{\pi_{-1}}\right)$$.

    Now all that remains is to convert back to a bound in the discrete case. Here we want a lower bound on the total amount of stake that the continuous process adds by time $\hat{t}$. Combining this with the fact that the maximum the total stake increases in any block is $\mu_1$ gives us a lower bound on how many blocks need to pass before the staking process can centralize. One difference in the analysis here between the upper and lower bound is that we also need to subtract the starting stake amounts on the bound for the total stake at time $\hat{t}$ to get a lower bound on how much stake was added. 

    Again like the upper bound, because Chebyshev is two-sided we have already paid for the probability bounding the probability $\Pi_1(t)$ and $\Pi_2(t)$ are below half their expectations. Thus without losing any more probability we can get a bound of 
    
    \begin{align*}
        \Pi_1(\hat{t})+\Pi_2(\hat{t}) < \frac{1}{2}\left(E[\Pi_1(\hat{t})]+E[\Pi_2(\hat{t})]\right) =  \frac{1}{2}\left(\pi_1\left(\frac{\rho(1-\epsilon)}{3\epsilon}\right)^{\frac{\mu_1}{\mu_1-\mu_n}}+\pi_{-1}\left(\frac{\rho(1-\epsilon)}{3\epsilon}\right)^{\frac{\mu_n}{\mu_1-\mu_n}}  \right)
    \end{align*}

Putting everything together gives a lower bound of 

\begin{align*}
    \frac{\frac{1}{2}E[\Pi_1(\hat{t})+\Pi_2(\hat{t})]-\Pi_1(0)-\Pi_2(0)}{\mu_1r} > 
\frac{1}{2\mu_1r}\left(\pi_1\left(\frac{\rho(1-\epsilon)}{3\epsilon}\right)^{\frac{\mu_1}{\mu_1-\mu_n}}
  +
  \pi_{-1}\left(\frac{\rho(1-\epsilon)}{3\epsilon}\right)^{\frac{\mu_n}{\mu_1-\mu_n}}
  - 2(\pi_1+\pi_{-1})\right)
\end{align*}

    where for $t$ less than this bound, $$\Pr(x_{1,t} \ge 1-\epsilon) < 4r\left(\frac{\mu_1}{\pi}+\frac{\mu_n}{\pi_{-1}}\right) <8\beta$$

\end{proof}

\end{document}